\journal{Journal of \LaTeX\ Templates}
\newcommand{\remove}[1]{}
\newtheorem{theorem}{Theorem}[section]
\newtheorem{lemma}{Lemma}[section]
\newtheorem{remark}{Remark}[section]
\def\E{{\mathbb{E}}}
\def\R{{\mathbb{R}}}
\DeclareMathOperator*{\argmin}{arg\,min}
\algnewcommand{\algorithmicgoto}{\textbf{go to}}%
\algnewcommand{\Goto}[1]{\algorithmicgoto~\ref{#1}}
\begin{document}
%%%%%%%%%%%%%%%%%%%%%%%%%%%%%%%%%

\begin{frontmatter}

\title{Service Scheduling for Random Requests with Quadratic Waiting Costs}
%\tnotetext[mytitlenote]{Fully documented templates are available in the elsarticle package on \href{http://www.ctan.org/tex-archive/macros/latex/contrib/elsarticle}{CTAN}.}

%% Group authors per affiliation:
\author{Ramya Burra, Chandramani Singh and Joy Kuri}% <-this % stops a space
%\thanks{The authors are with the Department
%of Electronic Systems Engineering, Indian Institute of Science Bangalore, 560012 India. Emails: {\{{burra,chandra,kuri}\}@iisc.ac.in} Telephone: +91 80 22932495}% <-this % stops a space
%\thanks{The first and second authors acknowledge support from Research Fellowships of Visvesvaraya PhD Scheme and INSPIRE Faculty Research Grant (DSTO-1363).}}
\address{Department of ESE, Indian Institute of Science, Bangalore}

%\fntext[myfootnote]{Since 1880.}

%% or include affiliations in footnotes:
%\author[mymainaddress,mysecondaryaddress]{Elsevier Inc}
%\ead[url]{www.elsevier.com}

%\author[mysecondaryaddress]{Global Customer Service\corref{mycorrespondingauthor}}
\cortext[mycorrespondingauthor]{Corresponding author}
%\ead{\{{burra,chandra,kuri}\}@iisc.ac.in}
\ead{burra,chandra,kuri@iisc.ac.in}
\begin{abstract}

%\color{blue}
We study service scheduling problems in a slotted system in which agents arrive with service requests according to a Bernoulli process and have to leave within two slots after arrival, service costs are quadratic in service rates, and there are also waiting costs. We consider quadratic waiting costs. We frame the problems as average cost Markov decision processes. While the studied system is a linear system with quadratic costs, it has state dependent control. Moreover, it also possesses a non-standard cost function structure in the case of fixed waiting costs, rendering the optimization problem complex. We characterize optimal policy. We provide an explicit expression showing that the optimal policy is linear in the system state. We also consider systems in which the agents make scheduling decisions for their respective service requests keeping their own cost in view. We consider quadratic waiting costs and frame this scheduling problems as stochastic games. We provide Nash equilibria of this game. To address the issue of unknown system parameters, we propose an algorithm to estimate them. We also bound the cost difference of the actual cost incurred and the cost incurred using estimated parameters. 
%\color{black}
\end{abstract}
\begin{keyword}
Service Scheduling, Quadratic waiting cost, Markov Decision Process
\end{keyword}

\end{frontmatter}
%\begin{keyword}
%Service Scheduling, Quadratic waiting cost, Fixed waiting cost
%\end{keyword}

%\end{frontmatter}

%\linenumbers
\remove{

%%%%%%%%%%%%%%%%%%%%%%%%%%%%%%%%%%%%%%%%%%%%%%%%%%%%%%%%%%%%%%%%%%%%%%%%%%%%%
%\title{Service Scheduling for Random Requests with Fixed and Quadratic Waiting Costs}

%\author{Ramya Burra, Chandramani Singh and Joy Kuri% <-this % stops a space
%\thanks{The authors are with the Department
%of Electronic Systems Engineering, Indian Institute of Science Bangalore, 560012 India. Emails: {\{{burra,chandra,kuri}\}@iisc.ac.in} Telephone: +91 80 22932495}% <-this % stops a space

\remove{
%\author{\IEEEauthorblockN{Ramya Burra\IEEEauthorrefmark{1},
%Chandramani Singh\IEEEauthorrefmark{2} and Joy Kuri\IEEEauthorrefmark{3} }
%\IEEEauthorblockA{Department of Electronic Systems Engineering,
%IISc\\
%Bangalore\\
%Email: \IEEEauthorrefmark{1}burra@iisc.ac.in,
%\IEEEauthorrefmark{2}chandra@iisc.ac.in,
%\IEEEauthorrefmark{3}kuri@iisc.ac.in}}
\author{\IEEEauthorblockN{Ramya Burra,
Chandramani Singh and Joy Kuri }\\
\IEEEauthorblockA{Department of ESE, \\
Indian Institute of Science Bangalore\\
Email: {\{{burra,chandra,kuri}\}@iisc.ac.in}}}
%\authorrunning{Burra et al.}   % abbreviated author list (for running head)
%

%%%% list of authors for the TOC (use if author list has to be modified)
%\author{Ramya Burra,  Chandramani Singh,Joy Kuri, and  Eitan Altman}
% Use \authorrunning{Short Title} for an abbreviated version of
% your contribution title if the original one is too long
%
\title{Service Scheduling for Random Requests with Fixed and Quadratic Waiting Costs}
%
%\titlerunning{Service Scheduling}  % abbreviated title (for running head)
%                                     also used for the TOC unless
%                                     \toctitle is used
%
%\author{Ramya Burra \and Chandramani Singh \and Joy Kuri}  %\inst{1}
%
%\authorrunning{Burra et al.}   % abbreviated author list (for running head)
%
%%%% list of authors for the TOC (use if author list has to be modified)
%\tocauthor{Ramya Burra, Chandramani Singh, and  Joy Kuri}
%
%\institute{Department of ESE, Indian Institute of Science Bangalore, India \\
%\email{{{burra,chandra,kuri}}@iisc.ac.in}}}
}

\maketitle              % typeset the title of the contribution

\begin{abstract}
We study service scheduling problems in a slotted system in which jobs arrive according to a Bernoulli process and have to leave within two slots after arrival, service costs are quadratic in service rates, and there are also waiting costs. We consider fixed and quadratic waiting costs. We frame the problems as average cost Markov decision processes. While the studied system is a linear system with quadratic costs, it has state dependent control. Moreover, it also possesses a non-standard cost function structure in the case of fixed waiting costs, rendering the optimization problem complex. In the case of fixed waiting costs, we provide the optimal policy when the parameters satisfy certain conditions. We also propose an approximate policy. In the case of quadratic waiting costs, we obtain explicit optimal policies in the case when all the jobs are of same size. In particular, we show that the optimal policy is linear in the system state. %When the job sizes can take two or more distinct values, we provide an algorithm that yields the optimal policy. 
We also consider scenarios where the jobs intend to minimize their own service and waiting costs. We frame these problems as stochastic games and analyze Nash equilibrium policies. We also present a comparative numerical study of different waiting costs and performance criteria.
\end{abstract}
}
\section{Introduction}
\remove{
Service scheduling problems have been widely studied in literature. They apply to a wide range of applications like speed scaling in CPUs, traffic scheduling, charge scheduling of electric vehicles (EVs), offloading services in mobile edge computing (MEC) etc. In mobile edge computing, various latency-critical applications are pushed on to the edge of the network. These edge nodes share various resources like computing power, storage facility etc. MEC devices must offload services to neighboring MEC devices or Mobile Cloud Computing (MCC) servers to process all assigned tasks within their specified delay requirements~\cite{Liu-et-al-2019}. Hence it is very crucial to optimally schedule services on different MEC devices. Also, low latency is the key parameter index for most of the real-time applications. Therefore, it is desirable for scheduling policies to generate a low latency schedule. From~\cite{b4} it can be noted that server power consumption in cloud computing increases as a convex function of the load. Therefore, to save on long term average power, delay-tolerant jobs need to be deferred. %Congestion control is the key to the smooth functioning of transportation networks. 

Service scheduling problems have been widely studied in literature. They apply to a wide range of applications like speed scaling in CPUs, scheduling of charging of electric vehicles (EVs), job scheduling in mobile edge computing (MEC) etc. In all these applications service costs, measured in term of power consumption, increase with service rates. For instance, server power consumption in cloud computing increases as a convex function of the load (see~\cite{b4}). Similarly, in the context of EV charging, electricity cost of a charging station could rise quite steeply as the load increases. So, when loads exceed certain thresholds, one may want to defer a subset of jobs, saving power at cost of increased latency. However,  low latency is the key performance  index for most of the real-time applications. For instance, in MEC, edge devices execute many latency-critical applications~\cite{Liu-et-al-2019}.  Therefore, it is desirable for service scheduling policies to generate schedules that attain right balance of power consumption and latency. 
}
In several systems, agents are admitted at slot boundaries, but they can
leave as soon as their services are complete, e.g., consider EVs at EV Charging stations. Then the waiting period of an agent can depend on the amount of the deferred service. It is reasonable to consider waiting costs that depend on the amount of the deferred service in such cases. In~\cite{EV_waiting_cost}, the authors introduce a non-decreasing convex penalty on EVs' average waiting time. %We have considered linear waiting costs in Chapter~\ref{Chapter:Linear}. %The authors in~\cite{quadratic_EV_dissatisfaction} consider systems where the service facility need not deliver complete service leading to dissatisfaction of agents which they capture using a dissatisfaction cost - a quadratic function of the unfinished amount of service. In a similar context, the authors in~\cite{LangTong_Restless_MAB} consider a convex dissatisfaction cost. 
Quadratic waiting costs capture users' higher sensitivity to incremental delays while still rendering the problems in the class of linear systems with quadratic costs. In this work, we consider quadratic waiting costs and analyze the resulting scheduling problems. In particular, we consider the cases where the jobs can stay for two slots but incur a quadratic waiting cost in second slots. We see that this service scheduling problem
is a special case of constrained linear quadratic control. We study both optimal scheduling and Nash equilibria in case of selfish agents. We analyze optimal and equilibrium policies for this problem.

\subsection{Related work}
%In case of charging EVs optimal scheduling minimizes the total charging cost. 
In~\cite{Bae-Kwasinski-12}, the authors propose a centralized algorithm to minimize the total charging cost of EVs. It determines the optimal amount of charging to be received at various charging stations en route. There is another line of work which intends to minimize waiting times at the charging stations. For instance, in~\cite{Gusrialdi-et-al-14}  the authors propose a distributed scheduling algorithm that uses local information of traffic flows measured at the neighbouring charging stations to uniformly utilize charging resources along the highway and minimize the total waiting time. In our work, we consider minimizing both charging and waiting costs simultaneously. More precisely we look at quadratic waiting costs.
In the context of traffic routing and scheduling, the authors in~\cite{b1} consider a scenario where agents compete for a common link to ship their demands to a destination. They
obtain the optimal and equilibrium flows in the presence of polynomial congestion cost. 

In~\cite{b10}, we consider routing on a ring network in the presence of quadratic congestion costs and also linear delay costs when traffic is redirected through the adjacent nodes.  However, the problems in~\cite{b10} are one-shot optimization problems as these do not have a temporal component.

Scheduling for minimizing energy costs has also been considered
in the context of CPU power consumption~\cite{b2,b11}, big data processing~\cite{b3},
production scheduling in plants~\cite{b7}. In~\cite{b26}, the authors propose an optimal online algorithm for job arrivals with deadline uncertainty. In this work, they consider convex processing cost. They also derive competitive ratio for the proposed algorithm. None of these studies accounts for waiting costs of jobs as considered in our work.

In~\cite{our-journal}, we studied service scheduling for Bernoulli job arrivals, quadratic service costs and linear waiting costs. We obtained a piece-wise linear optimal policy. We also studied Nash equilibrium in this setting.

\subsection{Our Contribution}
%\subsubsection*{Quadratic Waiting Costs:}
\begin{enumerate}
\item We study optimal scheduling in the presence of quadratic waiting costs. While this problem fits in the standard framework of linear quadratic control Markov decision problems, however, it does not meet certain controllability requirements. Here we derive the optimal scheduling policy for the case where jobs' service
requirements are identical.
\item We also provide an algorithm that yields the optimal control for general service requirements.
\item We obtain a symmetric Nash equilibrium for the associated stochastic game. 
\end{enumerate}
We also present a comparative numerical study to illustrate the impact of quadratic waiting cost structure and performance criteria~(optimal scheduling vs strategic scheduling by selfish agents). 

List of our contributions can be found in the Table~\ref{table:contributions}.

\begin{table}[htb]
%\footnotesize
\caption{List of contributions}
\label{table:contributions}
   \centering
%\resizebox{\columnwidth}{!}{%
%\begin{tabular}{ |p{3cm}||p{4cm}|p{4cm}|  }
\scriptsize
 \begin{tabular}{|c||c|}
 %\hline
%\multicolumn{3}{|c|}{List of Contribution} \\
 \hline
Versions     & Policy \\
 \hline
 Optimal scheduling (Bernoulli arrivals)      & Exact policy (Section~\ref{SectionQuadratic:quad-waiting-cost})\\

 \hline
  Optimal Scheduling (General arrivals) & Exact policy (Section~\ref{QuadracticSection:general})\\

 \hline
 Nash Equilibrium (Bernoulli arrivals)   & Exact policy (Section~\ref{SectionQuadratic:nash-equilibrium})\\
  
 \hline
\end{tabular}
%}
\end{table}
\section{System Model}
\label{SectionQuadratic:system-model}
We consider a time-slotted system where time is divided into discrete slots. Service requests arrive over slots to the service facility. Each request has to be completely served before its deadline. The deadline of a job is fixed at $2$ slots after its arrival. So service can be scheduled such that portions of the requests are served in the future slots before their respective deadlines. Serving requests incur a cost, and the price in a slot depends on the quantum of service delivered in that slot. We consider two scheduling problems: one where the service provider makes scheduling decisions in order to optimize the overall time-average cost and the other where the agents who bring the jobs make scheduling decisions for their respective jobs to minimize their individual costs. Below we present the system model and both the problems formally.
\subsection{Service request model} Agents with service requests arrive according to an i.i.d. Bernoulli$(p)$ process;~$ p \in (0,1)$. All the agents demand $\psi$ amount of service. Further, each request can be met in at most two slots, i.e., a fraction of
the demand arriving in a slot could be deferred to the next slot.
\subsection{Cost model} \label{SectionQuadratic:system-model-fixed-cost}The cost consists of two components:
\begin{itemize}
\item {\it Service cost:} The service price in a slot is a linear function of the total service
offered in that slot. Thus the total service cost in a slot is square
of the total offered service in that slot.
\item {\it Waiting cost:} We consider a scenario where a request's waiting cost is a quadratic function of the portion of service that is deferred. Each request incurs a waiting cost $dx^2$ where $x$ is the portion of its demand deferred to the next slot.   %We also study quadratic waiting costs which are described in appropriate sections later.
\end{itemize}
We consider the following two scheduling problems.
\subsection{Performance Criteria}
\subsubsection{Optimal Scheduling}
We aim to minimize the time-averaged cost of the service provider. Let, for $k \geq 1$,  $x_k$ be the remaining demand from slot $k-1$ to
slot $k$; $x_1 = 0$. This demand must be met in slot $k$.
Also, for $k \geq 1$, let $v_k$ be the extra service offered in slot $k$.
Clearly, $v_k \in [0,\psi]$ and is $0$ if there is no request in slot $k$.
A {\em scheduling policy} $\overline{\pi} = (\pi_k,k \geq 1)$
is a sequence of functions $\pi_k:[0,\psi] \rightarrow [0,\psi]$
such that if there is a service request in slot $k$ then $\pi_k(x_k)$ gives the amount of service deferred from slot $k$ to slot $k+1$. 
%Let $x_k,v_k$ and $\pi_k(x)$ be as defined in Section~\ref{SectionLinear:system-model}. 
More precisely, we want to determine the scheduling policy $\overline{\pi}$
that minimizes
\begin{equation}
\lim_{T \to \infty} \frac{1}{T}\sum_{k=1}^T\E[(x_k + v_k)^2 + d {x_{k}^2}].
\label{Quadraticeqn:average-cost}
\end{equation}
We obtain the optimal solution in Section ~\ref{SectionQuadratic:quad-waiting-cost}.

\subsubsection{Equilibrium for Selfish Agents}
Setup is similar to~\cite[Section~II B]{our-journal}. However, the expected cost of an agents is different as the waiting cost in this work is quadratic waiting costs. The expected cost of an agent who arrives in slot $k$, if it sees a remaining demand $x$,
is
\begin{align}
c_k(x,\overline{\pi}) = (\psi-\pi_k(x))(\psi-\pi_k(x)+x) + \pi_k(x)(\pi_k(x)+ p (\psi-\pi_{k+1}(\pi_k(x))) )+d \pi_k^2(x).\label{Quadraticeqn:selfish-agent-cost-expression}
\end{align}
We focus on symmetric Nash equilibria of the form $(\pi,\pi,\dots)$
and obtain one such equilibrium in Section~\ref{SectionQuadratic:nash-equilibrium}.
%\color{black}
\section{Optimal Scheduling}
\label{SectionQuadratic:quad-waiting-cost}
We first show that the optimal scheduling problem can
be transformed into a stochastic shortest path problem.
Towards that, from the {\it Renewal Reward Theorem}~\cite{renewal-reward-theorem}, and~\cite[Lemma~3.1]{our-journal} the following holds
\begin{align*}
\lim_{T \to \infty} \frac{1}{T}\sum_{k=1}^T\E[(x_k + v_k)^2 + d x_k^2   = p(1-p)\E\left[\sum_{k = A_i}^{A_{i+1}-1}\left((x_k + v_k)^2 + d x_k^2\right)\right].
\end{align*}
We now frame the problem as {\it stochastic shortest path problem} where terminal state corresponds to
absence of request in a slot similar to~\cite{our-journal}.

\paragraph*{Stochastic shortest path formulation}
We let $x_k$ and $u_k$ denote the remaining demand from slot $k-1$ to slot $k$
and the service offered in slot $k$, respectively. In particular, we let $x_k$ be the system state in slot $k$ and $t$ be the {\it terminal state}
which is hit if there is no new request in a slot.
Let $u_k$ be the action in slot $k$ provided $x_k$ is 
not a terminal state; $u_k \in [x_k, x_k+\psi]$. 
Given the state-action pair in slot $k$, $(x_k,u_k)$, the next state is $x_{k+1} = x_k + \psi - u_k$ with probability $p$ and the terminal state with probability $1-p$. The single stage cost before hitting the terminal state is $u_k^2 + d x_k^2$  and the terminal cost is $x_{k+1}^2(1+d)$. 

Unlike linear waiting cost problems, we can cast the unconstrained problem as a standard linear quadratic
control Markov decision problem. Towards this, let us redefine the system state at slot $k$~(if it is not the terminal state) to be 
\[y_k \coloneq \begin{bmatrix} 
x_k & \psi 
\end{bmatrix}^T.\] Clearly, the states evolve as 
\[
y_{k+1} = \begin{cases}
A y_k+Bu_k & \text{if slot $k+1$ has a request}, \\
t, & \text{otherwise},
\end{cases}
\]
where
 \[A=\begin{bmatrix} 
1 & 1 \\
0 & 1 
\end{bmatrix} \text{ and }B=\begin{bmatrix} 
-1 \\
0 
\end{bmatrix}.
\]
The single stage cost and the terminal cost can be written as $y_k^TQy_k+u_k^TRu_k$ and $y_{k+1}^THy_{k+1}$, respectively, where
\[Q=\begin{bmatrix} 
d & 0 \\
0 & 0 
\end{bmatrix}, 
R=1 
\text{ and } H=\begin{bmatrix} 
d+1 & 0 \\
0 & 0 
\end{bmatrix}
\]
Note that $Q$ and $H$ are positive semi-definite matrices whereas $R$ is positive definite as required in the standard framework of linear quadratic control problems~(see~\cite[Section~3.2]{b20}).\footnote{The framework in~\cite[Section~3.2]{b20} require that the system state evolve as $y_{k+1} = Ay_k + Bu_k + w_k$ where independent random vectors with zero mean and finite second moments. Moreover, $w_k$s must also be independent of $y_k$s and $u_k$s.  In our setup, the system evolves in deterministic fashion until it hits the terminal state. In particular, $w_k = 0$ for all $k$ until $y_{k+1} = t$. Hence the above requirement is met.}

Standard framework~\cite[Section~3.2]{b20}, requires the pairs $(A,B)$ and $(A,C)$, where $Q = C^TC$, are controllable and observable, respectively
~(see also~\cite[Proposition~4.1]{b21}). For readability, we provide the definitions of ``controllable" and ``observable" in the following.

\paragraph*{Definition}A pair $(A,B)$, where $A$ is an $n \times n$ matrix and $B$ is an $n \times m$ matrix, is said to be controllable if the $n \times nm$ matrix
\[
[b, AB,A^2b,\dots,A^{n-1}B]
\]
has full rank. A pair $(A,C)$ , where $A$ is an $n \times n$ matrix and $C$ is an $m \times n$ matrix, is said to be observable if the pair $(A^T,C^T)$ is controllable, where $A^T,C^T$ denote the transposes of $A$ and $C$, respectively. 

We can easily verify that $(A,C)$ is observable but $(A,B)$ is not controllable in our setup. Below, we explicitly obtain the optimal policy.

Let $J:[0, \psi] \rightarrow \R_+$ be the optimal cost function~(see~\cite[Chapter~1]{b21}, for definition of optimal cost function) for the problem. It is the solution of the following Bellman's equation: For all $x \in [0,\psi]$,
\begin{align}
\label{Quadraticeqn:quad-delay-cost-bellman}
J(x)=\min_{u \in [0,\psi]}\left\{(x+\psi-u)^2+dx^2+pJ(u)
+(1-p)u^2(1+d)\right\}
\end{align}
Let $\pi^\ast$ be the optimal stationary policy for this problem. Let us define the "$k$-stage problem" and
\remove{
as the one that allows at most $k+1$  service requests. More precisely, here the system is {\it forced to enter}
 the terminal state after $k+1$ service requests if it has not already done so. }
 let $J_k(\cdot)$ be the optimal cost function
 of the $k$-stage problem.Clearly,
 \begin{equation}
J_0(x)=  \min_{u \in [0,\psi]}\left\{(\psi-u+x)^2+dx^2+u^2(1+d)\right\}
\label{Quadraticeqn:J0}
 \end{equation}
 and
  \begin{align}
 J_k(x)= \min_{u \in [0,\psi]} 
\left\{(\psi-u+x)^2+dx^2+pJ_{k-1}(u)
 +(1-p)u^2(1+d) \right\}  \label{Quadraticeqn:Jk}.
 \end{align}
 for $k \geq 1$. We can express $J(\cdot)$ as the limit
of $J_k(\cdot)$ as $k$ approaches infinity.
Furthermore, we can express the desired optimal
policy also as the limit of the optimal controls
of $k$-stage problems~(i.e., optimal actions in~\eqref{Quadraticeqn:J0}-\eqref{Quadraticeqn:Jk}). This is the approach we follow to arrive at the
optimal scheduling policy.

\subsection{Optimal Policy}
Let us define sequences $a^{\ast}_i,b^{\ast}_i, i \geq 0$
as follows.
\begin{align}
a^\ast_i &= \begin{cases}
1+d, & \text{if } i = 0, \\
1+d-\frac{p}{1+a^{\ast}_{i-1}}, & \text{otherwise,}
\end{cases} \label{Quadraticeqn:ak-star} \\
b^\ast_i &= \begin{cases}
0, & \text{if } i = 0, \\
\frac{p(2a^{\ast}_{i-1}\psi+b^{\ast}_{i-1})}{1+a^{\ast}_{i-1}} & \text{otherwise.}
\end{cases} \label{Quadraticeqn:bk-star}
\end{align}
We first state a few properties of the above sequences.
\begin{lemma} $(a)$ The sequence $a^\ast_k, k\geq 0$ is a decreasing sequence and   converges to $a_\infty := \frac{d+\sqrt{d^2+4(1+d-p)}}{2}$.\\
$(b)$ The sequence $b^\ast_k, k\geq 0$ converges to
\[
b_{\infty} := \frac{2pa_\infty\psi}{1+a_\infty-p}.
\]
Further, $b^\ast_k < 2\psi$ for all $k \geq 0$ and so, $b_{\infty} \le 2\psi$. \\
\label{Quadraticlemma:monotonicity-ak-bk}
\end{lemma}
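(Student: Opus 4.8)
The plan is to treat part $(a)$ as a standard monotone fixed‑point iteration and part $(b)$ as an asymptotically‑autonomous affine recursion whose multiplier stays uniformly below $1$.

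\textbf{Part (a).} I would write $a^\ast_i = f(a^\ast_{i-1})$ for $i\ge 1$, where $f(a) = 1+d-\frac{p}{1+a}$, and note that $f$ is strictly increasing on $(-1,\infty)$ since $f'(a)=\frac{p}{(1+a)^2}>0$. A point $a$ is fixed by $f$ iff $a^2-da-(1+d-p)=0$; because $p\in(0,1)$ and $d\ge 0$ we have $1+d-p>0$, so this upward parabola has one negative root and one positive root, the latter being exactly $a_\infty$. Evaluating the quadratic at $a=1+d=a^\ast_0$ gives the value $p>0$, which forces $a^\ast_0>a_\infty$. Then $a^\ast_1=f(a^\ast_0)=1+d-\frac{p}{2+d}<1+d=a^\ast_0$, and applying the increasing map $f$ repeatedly gives, by induction, that $(a^\ast_k)$ is strictly decreasing; similarly $a^\ast_0>a_\infty$ together with $f(a_\infty)=a_\infty$ and monotonicity of $f$ yields $a^\ast_k>a_\infty$ for all $k$ by induction. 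A decreasing sequence bounded below converges; its limit $L\ge a_\infty$ satisfies $L=f(L)$ by continuity, and $a_\infty$ is the only fixed point in $[a_\infty,\infty)$, so $L=a_\infty$.

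\textbf{Part (b).} I would rewrite $b^\ast_k=\alpha_k b^\ast_{k-1}+\beta_k$ with $\alpha_k:=\frac{p}{1+a^\ast_{k-1}}$ and $\beta_k:=\frac{2p a^\ast_{k-1}\psi}{1+a^\ast_{k-1}}$. The bound $b^\ast_k<2\psi$ follows by induction: it holds at $k=0$ since $b^\ast_0=0$, and if $b^\ast_{k-1}<2\psi$ then $b^\ast_k<\frac{p(2a^\ast_{k-1}\psi+2\psi)}{1+a^\ast_{k-1}}=2p\psi<2\psi$. For convergence, part $(a)$ gives $\alpha_k\to\alpha_\infty:=\frac{p}{1+a_\infty}$ and $\beta_k\to\beta_\infty:=\frac{2p a_\infty\psi}{1+a_\infty}$, and since $a^\ast_{k-1}>a_\infty>0$ we even get $\alpha_k<\alpha_\infty<1$ for every $k$. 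Put $b_\infty:=\frac{\beta_\infty}{1-\alpha_\infty}=\frac{2p a_\infty\psi}{1+a_\infty-p}$, the stated limit. Using $\alpha_\infty b_\infty+\beta_\infty=b_\infty$, the error $e_k:=b^\ast_k-b_\infty$ satisfies $e_k=\alpha_k e_{k-1}+\eta_k$ with $\eta_k:=(\alpha_k-\alpha_\infty)b_\infty+(\beta_k-\beta_\infty)\to 0$. As $|e_k|$ is bounded (since $0\le b^\ast_k<2\psi$) and $\alpha_k<\alpha_\infty<1$, a routine $\limsup$ estimate — given $\varepsilon>0$ pick $N$ with $|\eta_k|<\varepsilon(1-\alpha_\infty)$ for $k>N$, so $|e_k|\le\alpha_\infty|e_{k-1}|+\varepsilon(1-\alpha_\infty)$ and hence $\limsup_k|e_k|\le\varepsilon$ — shows $e_k\to 0$, i.e.\ $b^\ast_k\to b_\infty$. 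Finally $b_\infty=\lim_k b^\ast_k\le 2\psi$ because each $b^\ast_k<2\psi$.

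\textbf{Main obstacle.} The monotonicity and fixed‑point identification in $(a)$, and the uniform bound in $(b)$, are routine. The point needing care is the convergence of $(b^\ast_k)$ in $(b)$: since $\alpha_k,\beta_k$ depend on $k$ this is not a single contraction, so one cannot simply invoke a Banach‑type argument; one must exploit that $\alpha_k$ stays uniformly below $1$ together with $\eta_k\to 0$, via the $\limsup$ bound above (equivalently, by showing $\prod_j\alpha_j\to 0$ and splitting the resulting sum). A secondary point is to keep the standing assumptions $p\in(0,1)$, $d\ge 0$, $\psi>0$ in view, since these are exactly what guarantee $1+d-p>0$ (needed for the discriminant and for the sign of the discarded root) and $\alpha_\infty<1$.
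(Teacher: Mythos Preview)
Your proposal is correct and follows essentially the same approach as the paper: for $(a)$ you use monotonicity of $f(a)=1+d-\frac{p}{1+a}$ to get a decreasing, bounded-below sequence converging to the larger fixed point, and for $(b)$ you prove $b^\ast_k<2\psi$ by induction and handle convergence via the same ``$e_k=\alpha_k e_{k-1}+\eta_k$ with $\eta_k\to 0$ and multiplier uniformly $<1$'' argument (the paper writes $\delta_{i+1}=\Delta_i+\bar p\,\delta_i$ with $\bar p=p/(1+a_\infty)$). The only cosmetic difference is that the paper's residual $\Delta_i$ carries the varying $b^\ast_i$ while your $\eta_k$ carries the constant $b_\infty$, which is why the paper first needs a crude boundedness estimate on $b^\ast_i$ before concluding $\Delta_i\to 0$; your decomposition avoids that detour.
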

\begin{proof}
See Appendix~\ref{Quadraticapp:monotonicity-ak-bk}.
\end{proof}

\begin{lemma}
 $0 < \frac{x+\psi-\frac{b_{i}}{2}}{(1+a_{i})} < \psi$ for all $0 \leq x \leq \psi,i \ge 0$.
\label{Quadraticlemma:monotonicity-xk}
\end{lemma}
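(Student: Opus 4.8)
The plan is to peel off both inequalities and reduce them to facts already established in Lemma~\ref{Quadraticlemma:monotonicity-ak-bk}, plus one short induction. First, the denominator causes no trouble: by Lemma~\ref{Quadraticlemma:monotonicity-ak-bk}$(a)$ the sequence $a^{\ast}_i$ decreases from $a^{\ast}_0=1+d$ to $a_\infty=\frac{d+\sqrt{d^2+4(1+d-p)}}{2}>0$, so $1+a^{\ast}_i\ge 1+a_\infty>0$ for every $i$, and multiplying the chain of inequalities through by $1+a^{\ast}_i$ preserves their direction. For the left inequality, since $x\ge 0$ it suffices to have $\psi-\tfrac{b^{\ast}_i}{2}>0$, i.e.\ $b^{\ast}_i<2\psi$, which is exactly the last assertion of Lemma~\ref{Quadraticlemma:monotonicity-ak-bk}$(b)$.

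For the right inequality, clearing the denominator shows that $\frac{x+\psi-b^{\ast}_i/2}{1+a^{\ast}_i}<\psi$ is equivalent to $x<\psi a^{\ast}_i+\tfrac{b^{\ast}_i}{2}$; since $x\le\psi$, it is enough to prove the $i$-indexed claim $2\psi<2\psi a^{\ast}_i+b^{\ast}_i$. I would introduce the auxiliary quantity $s_i:=2\psi a^{\ast}_i+b^{\ast}_i$ and note that the recursion \eqref{Quadraticeqn:bk-star} reads $b^{\ast}_i=\frac{p\,s_{i-1}}{1+a^{\ast}_{i-1}}$. Substituting this and \eqref{Quadraticeqn:ak-star} gives, after a one-line computation, $s_i=2\psi(1+d)+\frac{p(s_{i-1}-2\psi)}{1+a^{\ast}_{i-1}}$ for $i\ge 1$, with $s_0=2\psi(1+d)$. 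An easy induction finishes: $s_0=2\psi(1+d)>2\psi$ because $d>0$, and if $s_{i-1}>2\psi$ then the added term is strictly positive (as $p>0$ and $1+a^{\ast}_{i-1}>0$), so $s_i>2\psi(1+d)>2\psi$.

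There is no real obstacle here; the entire argument is a short reduction to Lemma~\ref{Quadraticlemma:monotonicity-ak-bk} plus the elementary induction on $s_i$. The only point requiring a little care is keeping every inequality strict — in particular so that the upper bound still holds at the endpoint $x=\psi$ — and this is guaranteed by the strict inequalities $d>0$ and $b^{\ast}_i<2\psi$ from the preceding lemma. I would then assemble the two bounds, dividing the established strict inequalities $0<x+\psi-\tfrac{b^{\ast}_i}{2}$ and $x+\psi-\tfrac{b^{\ast}_i}{2}<\psi(1+a^{\ast}_i)$ by the positive quantity $1+a^{\ast}_i$ to obtain the claim.
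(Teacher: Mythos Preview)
Your proof is correct and follows essentially the same route as the paper: the lower bound is reduced to $b^{\ast}_i<2\psi$ from Lemma~\ref{Quadraticlemma:monotonicity-ak-bk}$(b)$, and the upper bound is handled by the same induction on the quantity $2\psi a^{\ast}_i+b^{\ast}_i$, deriving the identical recursion $s_i=2\psi(1+d)+\frac{p(s_{i-1}-2\psi)}{1+a^{\ast}_{i-1}}$. The only differences are cosmetic---you name $s_i$ explicitly and are a bit more careful about the positivity of the denominator---but the argument is the paper's own.
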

\begin{proof}
See Appendix~\ref{Quadraticapp:monotonicity-xk}.
\end{proof}

The optimal scheduling policy is as follows.
\begin{theorem} 
%\label{theorem:single-psi}
\[
\pi^\ast(x) = \frac{x+\psi-\frac{b_{\infty}}{2}}{(1+a_{\infty})}.
\]
\label{Quadraticthm:optimal-policy}
\end{theorem}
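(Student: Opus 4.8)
The plan is to obtain $\pi^\ast$ as the limit of the optimal controls $\pi_k^\ast$ of the $k$-stage problems. Since in every slot the terminal state is reached with probability $1-p>0$, every policy is proper, so standard stochastic-shortest-path theory gives $J_k\to J$ pointwise and guarantees that the greedy policies of the truncated problems converge to an optimal stationary policy $\pi^\ast$. Hence it suffices to compute $\pi_k^\ast$ in closed form and pass to the limit. The driving observation is that each $J_k$ is quadratic: I will show by induction on $k$ that
\[
J_k(x)=\alpha_k x^2+\beta_k x+\gamma_k
\]
on $[0,\psi]$, with $\alpha_0,\beta_0,\gamma_0$ obtained from \eqref{Quadraticeqn:J0}.

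For the inductive step, substitute $J_{k-1}(u)=\alpha_{k-1}u^2+\beta_{k-1}u+\gamma_{k-1}$ into \eqref{Quadraticeqn:Jk} and group the terms not already contained in $(\psi-u+x)^2$ as $\hat a_{k-1}u^2+\hat b_{k-1}u+p\gamma_{k-1}$, where $\hat a_{k-1}=p\alpha_{k-1}+(1-p)(1+d)$ and $\hat b_{k-1}=p\beta_{k-1}$, with the convention $\hat a_{-1}=1+d$, $\hat b_{-1}=0$ covering the base case \eqref{Quadraticeqn:J0}. Minimizing the resulting strictly convex quadratic in $u$ gives the unconstrained minimizer
\[
u^\ast=\frac{x+\psi-\hat b_{k-1}/2}{1+\hat a_{k-1}}.
\]
Completing the square in \eqref{Quadraticeqn:Jk} then yields $\alpha_k=d+\tfrac{a^\ast_k}{1+a^\ast_k}$ and $\beta_k=\tfrac{2\psi a^\ast_k+b^\ast_k}{1+a^\ast_k}$ once one sets $\hat a_{k-1}=a^\ast_k$, $\hat b_{k-1}=b^\ast_k$; consequently $a^\ast_{k+1}=p\alpha_k+(1-p)(1+d)=1+d-\tfrac{p}{1+a^\ast_k}$ and $b^\ast_{k+1}=p\beta_k=\tfrac{p(2a^\ast_k\psi+b^\ast_k)}{1+a^\ast_k}$, which are precisely the recursions \eqref{Quadraticeqn:ak-star}–\eqref{Quadraticeqn:bk-star} and hence close the induction. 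By Lemma~\ref{Quadraticlemma:monotonicity-xk}, $u^\ast\in(0,\psi)$, so the box constraint $u\in[0,\psi]$ is inactive and $u^\ast$ is genuinely the optimal $k$-stage control: $\pi_k^\ast(x)=\tfrac{x+\psi-b^\ast_k/2}{1+a^\ast_k}$.

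It remains to let $k\to\infty$. By Lemma~\ref{Quadraticlemma:monotonicity-ak-bk}, $a^\ast_k\to a_\infty$ and $b^\ast_k\to b_\infty$, hence $\pi_k^\ast(x)\to\tfrac{x+\psi-b_\infty/2}{1+a_\infty}$ for every $x\in[0,\psi]$, and since $\pi^\ast=\lim_k\pi_k^\ast$ the claim follows. The step needing the most care is exactly this last limit: one must justify that the greedy policies of the truncated problems actually converge to $\pi^\ast$. I expect to handle it either by citing convergence of value iteration for proper-policy SSPs or, self-containedly, by verifying directly that $J_\infty(x):=\lim_k J_k(x)=\alpha_\infty x^2+\beta_\infty x+\gamma_\infty$ (the limit exists since $\alpha_k,\beta_k$ converge and $\gamma_k=p\gamma_{k-1}+$ a convergent bounded term with $p<1$) solves Bellman's equation \eqref{Quadraticeqn:quad-delay-cost-bellman} with $\pi^\ast$ attaining the minimum, and then invoking uniqueness of the solution. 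The remaining work — the algebra of completing the square and the index shift between $(\alpha_k,\beta_k)$ and $(a^\ast_k,b^\ast_k)$ — is routine.
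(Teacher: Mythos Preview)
Your proposal is correct and follows essentially the same route as the paper: both proofs run value iteration on the $k$-stage problems, show inductively that the optimal $k$-stage control is $\pi_k(x)=\frac{x+\psi-b^\ast_k/2}{1+a^\ast_k}$ (with Lemma~\ref{Quadraticlemma:monotonicity-xk} deactivating the box constraint), and then pass to the limit via Lemma~\ref{Quadraticlemma:monotonicity-ak-bk}. The only cosmetic difference is that the paper tracks the coefficients $(a_k,b_k)$ of $u^2,u$ inside the minimization while you track the post-minimization coefficients $(\alpha_k,\beta_k)$ of $J_k$ in $x$ and relate them back via the index shift $\hat a_{k-1}=a^\ast_k$; if anything, your discussion of why $\pi_k^\ast\to\pi^\ast$ is more careful than the paper's, which simply asserts the passage to the limit.
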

\begin{proof}
See Appendix~\ref{Quadraticapp:optimal-policy}.
\end{proof}
The optimal policies here are linear. When the pending service in a slot is $x$ and $u$ amount of service is deferred, the marginal service cost in the slot is lower bounded by $2(\psi-u+x)$ and the marginal waiting cost is upper bounded by $2du$.  Hence irrespective of the values of $x$, it is profitable to defer some amounts of service to the next slot.%This can be understood as follows. In the case of quadratic waiting costs, incremental waiting costs start at zero irrespective of the value of $d$ and increase with the amount of deferred service. On the other hand, marginal service costs are positive for any nonzero amount of offered service in a slot. Hence, for any given pending service and $d$, the optimal policies defer nonzero amounts of service.    

We illustrate the optimal policies via a few examples in Figure~\ref{Quadraticfig:Quadratic-image1}. We choose $\psi = 2, d= 1,$ and $p = 0.5, 0.85$ and $1$ for illustration. As expected, for the same pending service, the deferred service decreases as p increases. For $p =1$, there is no pending service in the first slot and no amount of service is deferred in the subsequent slots either.
%\color{black}
\begin{figure}[!ht]
	
	\centering
	
	\includegraphics[width=0.5\textwidth]{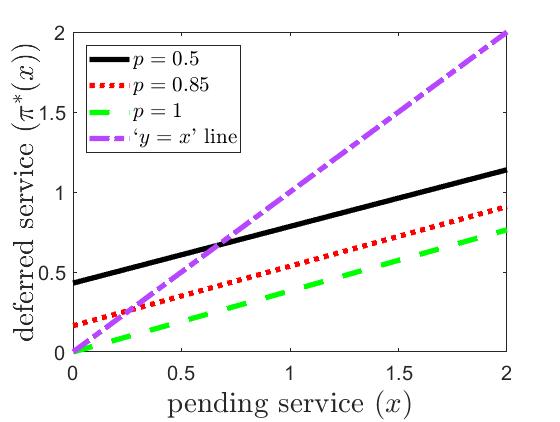}
	\caption{The optimal policies for $\psi=2,d=1,p=\{0.5,0.85,1\}$.}
	\label{Quadraticfig:Quadratic-image1}
\end{figure}
\section{Optimal Scheduling for General Service Requirements}
\label{QuadracticSection:general}
We now generalize the service request process of Section~\ref{SectionQuadratic:quad-waiting-cost} to allow general service requirements. We assume that, in each slot an agent with demand $\psi_i$~($i = 1,2,\dots,N$) arrives with probability $p_i$ and there is no arrival with probability $1 - \bar{p}$ where $\bar{p} \coloneqq \sum_{i=1}^N p_i$. Without loss of generality we assume that $\psi_i$s are monotonically increasing.  

Let us see the stochastic shortest path formulation of this problem. Let $J:\{\psi_1,\dots,\psi_N\} \times [0,\psi_N] \to \mathbb{R}_+$ be the optimal cost function and $\pi:\{\psi_1,\dots,\psi_N\} \times [0,\psi_N] \to [0,\psi_N]$  be the optimal policy for the problem~($\pi(\psi_i,\cdot) \in [0,\psi_i]$ for all $i$). The optimal cost function is solution of the following Bellman's equation: For all $x \in [0,\psi_N], i \in \{1,2,\dots,N\}$,
 \begin{align*}
 J(\psi_i,x)= \min_{u \in [0, \psi_i]} 
\Bigg\{(\psi_i-u+x)^2 & +dx^2 + \sum_{j=1}^N p_j J(\psi_j,u)\\
+ (1-\bar{p}) u^2(1+d)\Bigg\}
\end{align*}
Using a procedure similar to~\cite[Section~V-A]{our-journal} we propose Algorithm~\ref{alg:two-service-requirements} which provides the optimal policy. The policy derived after $k$ runs of the {\em do-while} loop is the optimal policy, $\pi_k(\psi_i,\cdot)~(i = 1,2,\dots,N)$ of an appropriately defined $k$-stage problem. We see that the termination criterion of the loop is met after a few iterations in most of the cases. In other words, $\pi_k(\cdot,\cdot), k \geq 0$ converge to $\pi(\cdot,\cdot)$ in  a few iterations. Unlike the case of Bernoulli arrivals in Section~\ref{SectionQuadratic:quad-waiting-cost}, the optimal policies here can be piecewise linear though they do not exhibit discontinuities.
\remove{
\begin{remark}
    We can propose an approximate policy on the lines of Section~\ref{SectionLinear:Optimal-scheduling-fictitious}. We can easily check that the same approximation error bounds as in Theorem~\ref{Linearthm:bounding-error} apply here as well.
\end{remark}
}
We illustrate the optimal policies for general service requirements via a few examples in Figure~\ref{Quadraticfig:Quadratic-image6}. We choose $(\psi_1, \psi_2) = (1,3), d= 1,$ and  $(p_1,p_2) = (0.2,0.7)$ and $(0.7,0.2)$ for illustration. As expected, more service is deferred when load in the current slot is higher, and so, $\pi(\psi_1,\cdot) \leq \pi(\psi_2,\cdot)$. For both the $(p_1,p_2)$ combinations, $x^2_{k,0} < 0$, and so $\pi(\psi_2,0) > 0$. $\pi(\psi_1,\cdot)$ are capped at $\psi_1$. Moreover, for the same pending service, the deferred service decreases as the expected load in the next slot increases, i.e., for given $x$ and $i = 1,2$, $\pi(\psi_i,x)$ for $(p_1,p_2) = (0.2,0.7)$ are smaller than  $\pi(\psi_i,x)$ for $(p_1,p_2) = (0.7,0.2)$.

%\color{black}

\begin{figure}[!ht]
	
	\centering
	
	\includegraphics[width=0.5\textwidth]{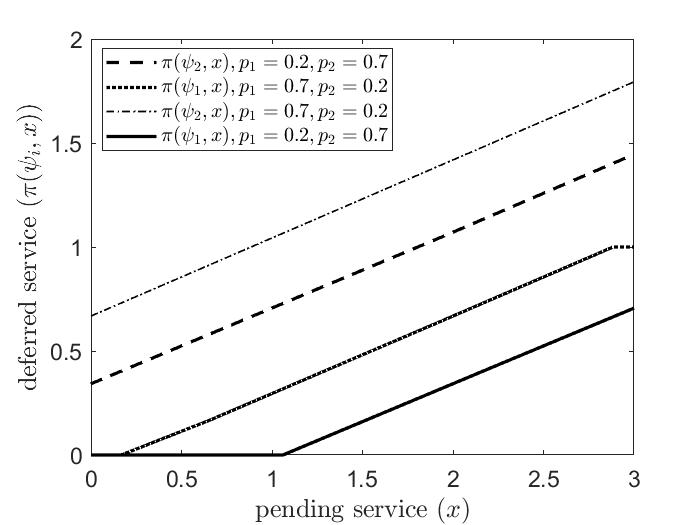}
	\caption{Optimal policies for $(\psi_1,\psi_2) = (1,3), d=1, (p_1,p_2)\in \{(0.2,0.7),(0.7,0.2)\}$.}
	\label{Quadraticfig:Quadratic-image6}
\end{figure}
\remove{

As in Section~\ref{SectionQuadratic:system-model}, we assume that agents with service requests arrive according to an i.i.d. process. However, in each slot, with probability $p_i$ an agent with demand $\psi_i$ arrives, and with probability $1-\sum_{i=1}^N p_i$ there is no arrival. Further, each request can be met in at most two slots, i.e., a fraction of the demand arriving in a slot could be deferred to the next slot. We consider that $\psi_i$s are in ascending order without loss of generality. We assume that the cost structure is similar to that in Section~\ref{SectionQuadratic:system-model}. }
\begin{algorithm}
\caption{(General Service Requirements)}
\label{alg:two-service-requirements}
\begin{algorithmic}
\State Input: $p_1,p_2,\dots,p_N,\psi_1,\psi_2,\dots,\psi_N,d$
\State $a_{k,-1}=\infty,b_{k,-1}=0~\forall k \ge 0$
\State $k=0$
\State $x_{0,0}=0,x_{0,1}=\psi_N,I_0=1$
\State $a_{0,0}=1+d,b_{0,0}=0$
\Do
\State $k=k+1$
\For{$i=1:N$}
\For{$j=0:I_{k-1}-1$}
\State \[x_{k,j}^i=\frac{2(1+a_{k-1,j})x_{k-1,j}+b_{k-1,j}}{2}-\psi_i\]
\EndFor
\EndFor
\For{$i=1:N-1$}
\vspace{-0.25in}
\State \begin{align*}&\bar{l}(i)=\max\{j:x_{k-1,j}< \psi_i\} \\
&x_{k,\bar{l}(i)+1}^1=\frac{2(1+a_{k-1,\bar{l}(i)})\psi_i+b_{k-1,\bar{l}(i)}}{2}-\psi_i
\end{align*}
\vspace{-0.25in}
\EndFor
\begin{align*}
(x_{k,0},\dots,x_{k,I_{k}})=&
\text{order}(x^1_{k,0},\dots,x^1_{k,\bar{l}(1)+1}
,\dots,x^{N-1}_{k,0},\dots,x^{N-1}_{k,\bar{l}(N-1)+1},x^N_{k,0},\dots,x^N_{k,I_{k-1}-1},0,\psi_2)
\end{align*}
\indent\Comment{This function removes the values outside $[0,\psi_N]$ and puts the remaining in ascending order. }
\For{$j=0:I_{k}-1$}
\For{$i=1:N$}
\begin{equation*}
    j_i=
    \begin{cases}
    -1,&\text{if}\ x_{k,0}^i>x_{k,j}\\
    \max\{l:x^i_{k,l}\le x_{k,j}\},&\text{otherwise}
    \end{cases}
\end{equation*}
\EndFor
\vspace{-0.4in}
\State \begin{align*}
a_{k,j}=&1-\sum_{m=1}^{N-1}\frac{p_m}{1+a_{k-1,j_m}}\mathds{1}_{\{j_m\le \bar{l}(m)\}}-\frac{p_N}{1+a_{k-1,j_N}}\\
b_{k,j}=&\sum_{m=1}^{N-1}\frac{p_m(2\psi_ma_{k-1,j_m}+b_{k-1,j_m})}{1+a_{k-1,j_m}}\mathds{1}_{\{j_m\le \bar{l}(m)\}}+\frac{p_N(2\psi_Na_{k-1,j_N}+b_{k-1,j_N})}{1+a_{k-1,j_N}}+d
\end{align*}
\EndFor 
%\State $n=n+1$
\doWhile{$(x_k,a_k,b_k)\neq (x_{k-1},a_{k-1},b_{k-1})$}
\State Output: $\forall i \in \{1,2,\dots,N\}$
\begin{equation*}
\pi(\psi_i,x) =
%\label{equ:b1-expressions-2psi}
\begin{cases}

0, &\text{if}\ x \le x_{k,0}^i\\

\big[\frac{2(x+\psi_i)-b_{k,j}}{2(1+a_{k,j})}\big]^{\psi_i}, &\text{if} \  x\in (x_{k,j}^i,x_{k,j+1}^i]\\
&0 \leq j < I_k\\
\end{cases}
  \end{equation*}
\end{algorithmic}
\end{algorithm}
\section{Nash equilibrium}
\label{SectionQuadratic:nash-equilibrium}
In this section we provide a Nash equilibrium for the non-cooperative game among the selfish agents~(see Section~\ref{SectionQuadratic:system-model}). As in~\cite{our-journal}, we focus on symmetric Nash equilibria where each agent's strategy is a piece-wise linear function of the remaining demand of the previous player. Our notation for agents' strategies and costs and analysis closely follow those in Section IV. Now the optimal cost of a player as a function of the pending demand given that all other players use strategy, $\pi'(\cdot)$ is given by
\remove{

%We omit the proofs for lack of space.
Let $C:[0,\psi]\to \mathbb{R}_+$ give the optimal cost for a player as a function of the pending demand given that all other players use strategy $\pi':[0,\psi]\to[0,\psi]$. Clearly, $C(x)$ is given by the following equation for all $x\in[0,\psi]$.} 
\[
C(x)=\min_{u \in [0,\psi]}\{(\psi-u)(\psi-u+x)+u(u+p(\psi-\pi'(u)))+du^2\}
\]
Also, $\bar{\pi}'=(\pi',\pi',\dots)$ a symmetric nash equilibrium 
\remove{
if ${\pi}'(x)$ attains the optimal cost in the above optimization problem for all $x $, i.e.,}if 
\[
\pi'(x)\in \argmin_{u \in [0,\psi]}\{(\psi-u)(\psi-u+x)+u(u+p(\psi-\pi'(u)))+du^2\},
\]
for all $x \in [0,\psi]$. We characterize one such Nash equilibrium in the following. We define $k$-stage problems as in~\cite{our-journal}.
\subsubsection*{A symmetric Nash equilibrium} Let us define sequences $a'_k,b'_k,k \ge -1$ as follows
\begin{align}
a'_k &= \begin{cases}
0, & \text{if } k = -1 \\
\frac{1}{2(2+d-pa'_{k-1})}, & \text{otherwise}
\end{cases} \label{Quadraticeqn:a-case2-nash} \\
b'_k &= \begin{cases}
0, & \text{if } k = -1 \\
\frac{(2-p)\psi+pb'_{k-1}}{2(2+d-pa'_{k-1})}, & \text{otherwise}
\end{cases} \label{Quadraticeqn:b-case2-nash}
\end{align}
We first state a few properties of the above sequences.
\begin{lemma}
\label{Quadraticlemma:nash-properties-a-b}
%\begin{enumerate}
    $(a)$ The sequence $a'_k, k \ge -1$ converges to \[a'_{\infty}\coloneqq \frac{4+2d}{4p}-\frac{\sqrt{(2+d)^2-2p}}{2p}.\] Also,  $a'_{\infty}<\frac{1+\frac{d}{2}}{p}$.\\
   $(b)$ The sequence $b'_k, k \ge -1$ converges to \[b'_{\infty}\coloneqq \frac{a'_{\infty}(2-p)\psi}{1-a'_{\infty}p}.\]
%\end{enumerate}
\end{lemma}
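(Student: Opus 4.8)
The plan is to treat the recursion for $a'_k$ as a scalar fixed-point iteration and to show convergence by exhibiting a closed interval that the iteration maps into itself and on which the update map is a contraction; then the convergence of $b'_k$ follows as a linear recursion driven by the now-convergent $a'_k$. First I would define $f(a) \coloneqq \frac{1}{2(2+d-pa)}$, so that $a'_k = f(a'_{k-1})$ for $k \ge 0$ with $a'_{-1} = 0$. The candidate limit $a'_\infty$ is the fixed point of $f$, i.e. the solution of $2a(2+d-pa) = 1$, which is the quadratic $2p\,a^2 - 2(2+d)a + 1 = 0$; the smaller root is exactly the claimed $a'_\infty = \frac{4+2d}{4p} - \frac{\sqrt{(2+d)^2 - 2p}}{2p}$ (here one checks $(2+d)^2 - 2p > 0$ since $d \ge 0$ and $p < 1$, so the square root is real, and one picks the root that is the relevant limit). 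I would then verify $a'_\infty < \frac{1 + d/2}{p}$ directly: this is equivalent to $\sqrt{(2+d)^2 - 2p} > 0$, which holds, so the bound is immediate once the closed form is established. (This bound is also what keeps the denominators $2+d - p a'_k$ bounded away from zero along the iteration, which is the technical point needed for everything to be well-defined.)

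Next I would establish the invariant interval and monotonicity. Since $a'_{-1} = 0$ and $f(0) = \frac{1}{2(2+d)} > 0$, and since $f$ is increasing on the region where $2+d-pa > 0$, I would show by induction that $0 \le a'_k \le a'_\infty$ for all $k$: if $a'_{k-1} \le a'_\infty$ then $a'_k = f(a'_{k-1}) \le f(a'_\infty) = a'_\infty$, and $a'_k \ge f(0) > 0$. In fact the sequence is monotonically increasing — from $a'_{-1} = 0 \le a'_0$ and monotonicity of $f$, $a'_{k-1} \le a'_k$ implies $a'_k = f(a'_{k-1}) \le f(a'_k) = a'_{k+1}$. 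A bounded monotone sequence converges, and the limit must be a fixed point of the continuous map $f$ lying in $[0, a'_\infty]$; since $a'_\infty$ is the relevant root in that interval, the limit is $a'_\infty$. (Alternatively one can bound $|f'(a)| = \frac{2p}{(2(2+d-pa))^2} \cdot \tfrac12$... more cleanly, $f'(a) = \frac{p}{2(2+d-pa)^2}$, and using the invariant interval one checks $|f'| < 1$ there, giving a contraction and hence geometric convergence — I would include whichever argument is cleanest.)

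For part $(b)$, once $a'_k \to a'_\infty$ is in hand, the recursion $b'_k = \frac{(2-p)\psi + p b'_{k-1}}{2(2+d - pa'_{k-1})}$ is an affine recursion $b'_k = \alpha_k b'_{k-1} + \beta_k$ with $\alpha_k = \frac{p}{2(2+d-pa'_{k-1})} \to \frac{p}{2(2+d-pa'_\infty)} = p\,a'_\infty$ and $\beta_k = \frac{(2-p)\psi}{2(2+d-pa'_{k-1})} \to (2-p)\psi\,a'_\infty$. The limiting contraction factor is $p\,a'_\infty < p \cdot \frac{1+d/2}{p} \cdot \frac{p}{\text{(something)}}$ — more precisely I would show $p\,a'_\infty < 1$ (equivalently $a'_\infty < 1/p$, which follows from the sharper bound $a'_\infty < \frac{1+d/2}{p}$ proved in part $(a)$ since $1 + d/2 \le$... no: one needs $1 + d/2$ could exceed $1$; so instead I would argue $p\,a'_\infty < 1$ directly from the quadratic, namely $2p a'^2_\infty = 2(2+d)a'_\infty - 1$, which gives $a'_\infty = \frac{2(2+d)a'_\infty - 1}{2 a'_\infty p}$... cleanest: from $2(2+d - p a'_\infty) a'_\infty = 1$ and $a'_\infty > 0$ we get $2+d - p a'_\infty = \frac{1}{2a'_\infty} > 0$, and since $2+d \ge 2 > \frac{1}{2a'_\infty} + p a'_\infty \ge 2\sqrt{p}$ is automatic, one extracts $p a'_\infty = 2+d - \frac{1}{2a'_\infty}$; combined with $a'_\infty \le a'_0^{-1}\cdots$). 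The cleanest route is: $b'_k - b'_\infty = \alpha_k (b'_{k-1} - b'_\infty) + (\alpha_k - p a'_\infty) b'_\infty + (\beta_k - (2-p)\psi a'_\infty)$ where $b'_\infty$ solves $b = p a'_\infty b + (2-p)\psi a'_\infty$, i.e. $b'_\infty = \frac{(2-p)\psi a'_\infty}{1 - p a'_\infty}$, matching the claim; then since $\alpha_k \to p a'_\infty < 1$ and the additive perturbations vanish, a standard lemma on asymptotically-contractive affine recursions gives $b'_k \to b'_\infty$. I expect the main obstacle to be the bookkeeping of the perturbation terms in this last step and, subtly, verifying the strict inequality $p\,a'_\infty < 1$ needed for $b'_\infty$ to be finite and positive; both are elementary but must be done carefully, and I would likely relegate them, together with the algebraic identification of the quadratic roots, to the appendix.
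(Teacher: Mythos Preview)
Your proposal is correct and follows essentially the same route as the paper: monotonicity of the map $a\mapsto \frac{1}{2(2+d-pa)}$ together with $a'_{-1}<a'_0$ gives an increasing bounded sequence converging to the smaller fixed point, and for $(b)$ the paper likewise writes $\delta_i=b'_i-b'_\infty$ and shows $\delta_{i+1}=\Delta_i+\bar p\,\delta_i$ with $\bar p=\frac{p}{4+2d-2pa'_\infty}=p\,a'_\infty<1$ and $\Delta_i\to 0$. The inequality $p\,a'_\infty<1$ that you flag as the delicate point is indeed needed (and the paper asserts it without detail); it follows quickly from the closed form since $p\,a'_\infty=\tfrac12\big((2+d)-\sqrt{(2+d)^2-2p}\big)<1$ is equivalent to $d^2<(2+d)^2-2p$, i.e.\ $p<2+2d$.
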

\begin{proof}
See Appendix~\ref{Quadraticappendix:nash-properties-a-b}.
\end{proof}
\begin{lemma}
 $0 < a'_k x+b'_k < \psi$ for all $0 \leq x \leq \psi,k \ge 0$.
\label{Quadraticlemma:quad-game-no-caps}
\end{lemma}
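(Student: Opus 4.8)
The plan is to prove, by induction on $k$, the slightly stronger package of statements: $a'_k \ge 0$, $b'_k \ge 0$, and $a'_k\psi + b'_k < \psi$, with the first two strict once $k \ge 0$. This suffices for the lemma: for $k \ge 0$ the map $x \mapsto a'_k x + b'_k$ is affine with nonnegative slope $a'_k$, so on $[0,\psi]$ it is minimized at $x = 0$, giving $a'_k x + b'_k \ge b'_k > 0$, and maximized at $x = \psi$, giving $a'_k x + b'_k \le a'_k\psi + b'_k < \psi$. Thus everything reduces to the three scalar facts about the sequences $(a'_k)$ and $(b'_k)$.

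The base case is $k = -1$, where $a'_{-1} = b'_{-1} = 0$ makes all three claims trivial (with equality in the first two). For the inductive step I would first combine the hypotheses $a'_{k-1}\psi + b'_{k-1} < \psi$ and $b'_{k-1} \ge 0$ (together with $\psi > 0$) to extract $a'_{k-1} < 1$. With $p \in (0,1)$ and $d \ge 0$ this bounds the common denominator appearing in \eqref{Quadraticeqn:a-case2-nash}--\eqref{Quadraticeqn:b-case2-nash} below by $2(2+d-p) > 2(1+d) > 0$; hence it is positive and $a'_k > 0$, while positivity of the numerator $(2-p)\psi + p b'_{k-1}$ (since $2-p>0$, $\psi>0$, $b'_{k-1}\ge 0$) gives $b'_k > 0$.

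For the upper bound I would compute
\[
a'_k\psi + b'_k = \frac{(3-p)\psi + p\,b'_{k-1}}{2(2+d-p\,a'_{k-1})},
\]
so, clearing the positive denominator, $a'_k\psi + b'_k < \psi$ is equivalent to $p\bigl(b'_{k-1} + 2a'_{k-1}\psi\bigr) < (1 + 2d + p)\psi$. The inductive hypothesis gives $b'_{k-1} < (1-a'_{k-1})\psi$, hence $b'_{k-1} + 2a'_{k-1}\psi < (1+a'_{k-1})\psi$, and so it is enough to check $p(1 + a'_{k-1}) \le 1 + 2d + p$, i.e. $p\,a'_{k-1} \le 1 + 2d$, which is clear since $a'_{k-1} < 1$, $p < 1$ and $d \ge 0$. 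This closes the induction.

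The argument is essentially bookkeeping rather than genuine analysis; the one step needing care is to draw $a'_{k-1} < 1$ out of the inductive hypothesis before anything else, since that single inequality is what simultaneously keeps the denominator positive and drives the upper bound, and one must also track which inequalities stay strict. Note that Lemma~\ref{Quadraticlemma:nash-properties-a-b} is not actually needed here, although its limiting values $a'_\infty < 1$ and $b'_\infty > 0$ provide a consistency check on the conclusion $0 < a'_\infty x + b'_\infty < \psi$.
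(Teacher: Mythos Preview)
Your proof is correct and follows the same inductive skeleton as the paper's: verify the base case, then for the upper bound compute $a'_k\psi+b'_k$ from the recursions, clear the denominator, and use the inductive hypothesis $a'_{k-1}\psi+b'_{k-1}<\psi$ to reduce to a bound on $a'_{k-1}$. The genuine difference lies in how that final bound on $a'_{k-1}$ is supplied. The paper invokes Lemma~\ref{Quadraticlemma:nash-properties-a-b}(a), using monotonicity and $a'_\infty<\frac{1+d/2}{p}$ to conclude $a'_k<\frac{1+2d}{p}$; it also appeals to that lemma to show the denominator $2(2+d-pa'_{k-1})$ is positive for the lower bound. You instead extract $a'_{k-1}<1$ directly from the inductive hypothesis (via $a'_{k-1}\psi\le a'_{k-1}\psi+b'_{k-1}<\psi$), and this single sharper inequality simultaneously forces the denominator positive and closes the upper bound. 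The payoff of your route is a fully self-contained argument that does not depend on the convergence analysis of Lemma~\ref{Quadraticlemma:nash-properties-a-b}; the paper's route is shorter once that lemma is in hand, since it need not package $a'_k>0$ and $b'_k>0$ into the induction.
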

\begin{proof}
See Appendix~\ref{Quadraticapp:quad-game-no-caps}.
\end{proof}
\begin{theorem}
\label{Quadraticthm:quad-nash-equilibrium}
$\bar{\pi}'=(\pi',\pi',\dots)$ is a symmetric Nash equilibrium where\\
\[
\pi'(x)=a'_{\infty}x+b'_{\infty}, \forall x \in [0,\psi].
\]
\end{theorem}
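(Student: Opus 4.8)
The plan is a direct verification that the stationary strategy $\pi'(x)=a'_{\infty}x+b'_{\infty}$ is a best response to itself; by the definition of symmetric Nash equilibrium recalled just before the lemmas, this is exactly what is needed. Fix a tagged agent who sees pending demand $x\in[0,\psi]$ while every other agent uses $\pi'$. Substituting $\pi'(u)=a'_{\infty}u+b'_{\infty}$ into the agent's objective $(\psi-u)(\psi-u+x)+u(u+p(\psi-\pi'(u)))+du^2$ and collecting powers of $u$, I would obtain the quadratic $f_x(u)=(2+d-pa'_{\infty})u^2-\big((2-p)\psi+x+pb'_{\infty}\big)u+\psi^2+\psi x$.

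Next I would check that $f_x$ is strictly convex: by Lemma~\ref{Quadraticlemma:nash-properties-a-b}(a), $a'_{\infty}<(1+\tfrac d2)/p$, so $pa'_{\infty}<1+\tfrac d2$ and hence $2+d-pa'_{\infty}>1+\tfrac d2>0$. Its unconstrained minimizer is therefore $u^{\star}(x)=\dfrac{(2-p)\psi+x+pb'_{\infty}}{2(2+d-pa'_{\infty})}$. I would then show $u^{\star}(x)=\pi'(x)$ by matching the coefficient of $x$ and the constant term separately. The coefficient of $x$ equals $\dfrac{1}{2(2+d-pa'_{\infty})}$, which is $a'_{\infty}$ precisely because $a'_{\infty}$ is the fixed point of the recursion~\eqref{Quadraticeqn:a-case2-nash}, i.e.\ $a'_{\infty}=\dfrac{1}{2(2+d-pa'_{\infty})}$, the limiting identity established in Lemma~\ref{Quadraticlemma:nash-properties-a-b}(a). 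Using $2(2+d-pa'_{\infty})=1/a'_{\infty}$, the constant term becomes $a'_{\infty}\big((2-p)\psi+pb'_{\infty}\big)$, and solving $b'_{\infty}=a'_{\infty}\big((2-p)\psi+pb'_{\infty}\big)$ gives $b'_{\infty}=\dfrac{a'_{\infty}(2-p)\psi}{1-a'_{\infty}p}$, which is exactly the limit in Lemma~\ref{Quadraticlemma:nash-properties-a-b}(b). Hence $u^{\star}(x)=a'_{\infty}x+b'_{\infty}=\pi'(x)$.

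To finish, I would invoke Lemma~\ref{Quadraticlemma:quad-game-no-caps} in the limit $k\to\infty$: since $0\le a'_{\infty}x+b'_{\infty}\le\psi$ for all $x\in[0,\psi]$, the unconstrained minimizer $u^{\star}(x)=\pi'(x)$ is feasible for the box constraint $u\in[0,\psi]$, so by strict convexity it is the unique constrained minimizer of $f_x$ over $[0,\psi]$. Therefore $\pi'(x)\in\argmin_{u\in[0,\psi]}\{(\psi-u)(\psi-u+x)+u(u+p(\psi-\pi'(u)))+du^2\}$ for every $x\in[0,\psi]$, which is the Nash fixed-point condition, so $\bar\pi'$ is a symmetric Nash equilibrium.

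I do not expect any step to be a genuine obstacle; the argument is algebraic once the limiting identities and feasibility bounds of Lemmas~\ref{Quadraticlemma:nash-properties-a-b}--\ref{Quadraticlemma:quad-game-no-caps} are available. The point that needs care is the reduction to the single-stage problem $f_x$: a unilateral deviation by the tagged agent to action $u$ changes the pending demand seen by the next agent to $u$, but since that agent and all later ones still play $\pi'$, the deviator's cost is exactly $f_x(u)$ — the $p(\psi-\pi'(u))$ term already encodes the successor's reaction — so optimizing pointwise in $x$ is legitimate. One could alternatively realize $\pi'$ as the limit of best responses in the $k$-stage games along the lines of~\cite{our-journal}, which is how the sequences $a'_k,b'_k$ arise, and pass to the limit via Lemma~\ref{Quadraticlemma:nash-properties-a-b}; the direct verification above is shorter.
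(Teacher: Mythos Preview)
Your proof is correct and takes a genuinely different route from the paper's. The paper argues via the $k$-stage problems: it shows by induction that the $k$-stage best response is $\pi'_k(x)=a'_k x+b'_k$ (using Lemma~\ref{Quadraticlemma:quad-game-no-caps} at each stage to remove the $[0,\psi]$ clipping), then appeals to a value-iteration convergence result from~\cite{b21} together with Lemma~\ref{Quadraticlemma:nash-properties-a-b} to pass to the limit and obtain $\pi'(x)=a'_\infty x+b'_\infty$. You instead verify the Nash fixed-point condition directly: plug the candidate $\pi'$ into the one-shot objective, identify the resulting quadratic's minimizer, and check via the fixed-point identities of Lemma~\ref{Quadraticlemma:nash-properties-a-b} that it coincides with $\pi'(x)$, with Lemma~\ref{Quadraticlemma:quad-game-no-caps} (in the limit) supplying feasibility. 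Your approach is shorter and entirely elementary, avoiding any dynamic-programming convergence machinery; the paper's approach is more constructive, showing how $\pi'$ arises as the limit of finite-horizon best responses (which also motivates the recursions \eqref{Quadraticeqn:a-case2-nash}--\eqref{Quadraticeqn:b-case2-nash} rather than taking them as given). Your closing remark already anticipates this contrast; the only minor point is that passing to the limit in Lemma~\ref{Quadraticlemma:quad-game-no-caps} yields non-strict inequalities $0\le\pi'(x)\le\psi$, but this is of course sufficient for feasibility and hence for your argument.
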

\begin{proof}
See Appendix~\ref{Quadraticappendix:quad-nash-equilibrium}.
\end{proof}
Observe that, similar to the optimal policies in Section~\ref{SectionQuadratic:quad-waiting-cost}, the symmetric Nash equilibria given by the above theorems are also linear. 

We now illustrate symmetric Nash equilibria for the same parameters as used to illustrate the optimal polices in Section~\ref{SectionQuadratic:quad-waiting-cost} in Figure~\ref{Quadraticfig:Quadratic-image2}. As in the optimal policies, for the same pending service, the deferred service decreases as p increases. For $p =1$, the system attains a steady state wherein each user observes a pending service $0.5231$ (the fixed point of $\pi'(x) = x$ in Figure~\ref{Quadraticfig:Quadratic-image2}) and defers the same amount of service. Consequently, the amount of offered service in each slot equals $\psi$ in the steady state.
\begin{figure}[!ht]
	\centering
	\includegraphics[width=0.5\textwidth]{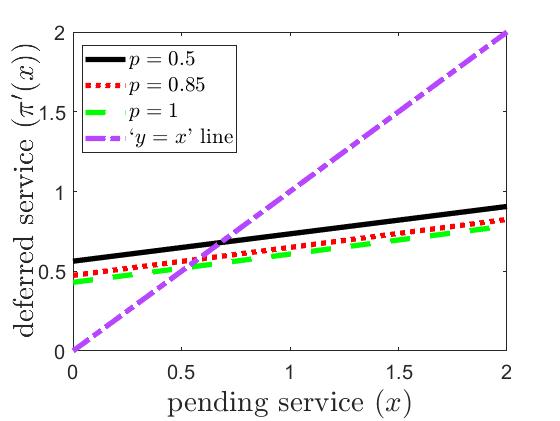}
	\caption{The Nash equilibrium policies for $\psi=2,d=1,p=\{0.5,0.85,1\}$.}
	\label{Quadraticfig:Quadratic-image2}
\end{figure}
\color{black}
\section{Unknown system parameters}
All throughout this work we assumed that arrival statistics are known to the service facility. However, in many real time applications it may not be available to the service facility. To deal with such scenarios one has to learn the unknown parameter on the go. The action at any slot should be guided by the current estimate of the parameter in that slot. However, this process is a cumbersome process. So as a first step towards this, we first estimate the parameter upto $\epsilon$ accuracy with high probability. Then, we propose to use this estimate for deciding on action in any slot. In the following we first outline the details of estimating the unknown parameter. Subsequently, we provide an upper bound on the difference of the cost incurred when the parameter is known and the cost incurred when the parameter is unknown.
\subsection{Estimating the unknown parameter $p$}
\label{Quadraticsec:unknown}
Let us define a sequence of random variables $\{X_i\}_{i \ge 1}$. If there is an arrival in slot $k$, then $X_k=1$ else $X_k=0$. Note that $X_i$s are independent random variables bounded in $[0,1]$. 
\paragraph*{Algorithm} In any slot $k \ge 1$, the estimate of parameter $p$ is $\frac{\sum_{i=1}^{k} X_i}{k}$. 

More precisely, we are counting frequency of arrivals. Using Hoeffding's inequality, the following holds
\begin{equation*}
    P\Big(|\frac{1}{n}\sum_{i=1}^nX_i-p|\ge \epsilon \Big) \le 2e^{-2\epsilon^2n}.
\end{equation*}
The following is the quantity of our interest, from the above inequality the following holds
\begin{equation*}
    P\Big(|\frac{1}{n}\sum_{i=1}^nX_i-p|\le \epsilon \Big) \ge 1-2e^{-2\epsilon^2n}.
\end{equation*}
Let $h$ be the desired high probability for the estimate to be $\epsilon$ close, then from the above inequality after $\tilde{n}=-\frac{1}{2\epsilon^2}\log{\frac{1-h}{2}}$ slots, the estimate and the original parameter $p$ are $\epsilon$ close with atleast probability $h$. Precisely, for fixed $\epsilon,h$, there exists a $\tilde{n}=-\frac{1}{2\epsilon^2}\log{\frac{1-h}{2}}$  such that $\forall n \ge \tilde{n}$
\begin{equation*}
    P\Big(|\frac{1}{n}\sum_{i=1}^nX_i-p|\le \epsilon \Big) \ge h.
\end{equation*}
Note that $\tilde{n}$ is a function of $\epsilon,h$.
\subsection{Bound on the cost difference}
Recollect that in the context of Bernoulli arrivals the optimal costs can be defined as follows from~\eqref{Quadraticeqn:J0} and \eqref{Quadraticeqn:Jk}
 \begin{equation}
J_0(x)=  \left\{(\psi-\pi^\ast(x)+x)^2+dx^2+\pi^\ast(x)^2(1+d)\right\}
\label{Quadraticeqn:J0-pi}
 \end{equation}
 and $\forall k \ge 1$
  \begin{align}
 J_k(x)= \left\{(\psi-\pi^\ast(x)+x)^2+dx^2+pJ_{k-1}(\pi^\ast(x))
 +(1-p)\pi^\ast(x)^2(1+d) \right\}  \label{Quadraticeqn:Jk-pi},
 \end{align}
where $\pi^\ast(x)$ is the optimal scheduling policy as defined in Theorem~\ref{Quadraticthm:optimal-policy}.

We now consider a setup where the parameter $p$ is unknown to the service facility. We estimate the parameter using algorithm in Section~\ref{Quadraticsec:unknown} till $\tilde{n}$ slots. This assures that the estimate and the original parameter are $\epsilon$ with probability $h$ following the arguments in Section~\ref{Quadraticsec:unknown}. Let us call this estimate to be $\tilde{p}$. Therefore, we use the following policy for a pending service of $x$ units.
 \begin{align*}
\tilde{\pi}(x) &= \frac{x+\psi-\frac{b_{\infty}(\tilde{p})}{2}}{(1+a_{\infty}(\tilde{p}))},\\
\text{ where }a_\infty(\tilde{p}) &= \frac{d+\sqrt{d^2+4(1+d-\tilde{p})}}{2}\\
\text{and }b_{\infty}(\tilde{p}) &= \frac{2\tilde{p}a_\infty(\tilde{p}) \psi}{1+a_\infty(\tilde{p})-\tilde{p}}.
\end{align*}
Note that the above policy is same as the optimal policy, however as we are unaware of the original $p$, we use the estimated $\tilde{p}$ instead of that. We now consider the cost of this system starting after $\forall n \ge \tilde{n}$ samples, it can be defined as follows
 \begin{equation*}
 \tilde{J}(x)= \left\{(\psi-\tilde{\pi}(x)+x)^2+dx^2+p\tilde{J}(\tilde{\pi}(x))
 +(1-p)\tilde{\pi}(x)^2(1+d) \right\}.
 \end{equation*}
 Note that for any fixed $\epsilon,h$ the cost function $\tilde{J}(x)$ depends on the estimate $\tilde{p}$, which in turn depends on random variables $X_i,i \le \tilde{n}$. Hence, $\tilde{J}(x)$ is a random variable.
 The $k$ stage problem can be defined as follows 
 \begin{equation}
\tilde{J}_0(x)=  \left\{(\psi-\tilde{\pi}(x)+x)^2+dx^2+\tilde{\pi}(x)^2(1+d)\right\}
\label{Quadraticeqn:J0-pi-tilda}
 \end{equation}
 and $\forall k \ge 1$
  \begin{align}
 \tilde{J}_k(x)= \left\{(\psi-\tilde{\pi}(x)+x)^2+dx^2+p\tilde{J}_{k-1}(\tilde{\pi}(x))
 +(1-p)\tilde{\pi}(x)^2(1+d) \right\}  \label{Quadraticeqn:Jk-pi-tilda}.
 \end{align}
For any fixed $\epsilon,h$ the cost function $\tilde{J}_k(x),\forall k \ge 0$ is also a random variable as they depend on the random variables $X_i,i \le \tilde{n}$. In the following, we would like to bound $\tilde{J} - J$. Notice that $\lim_{k \to \infty}\tilde{J}_k(x) = \tilde{J}(x)$ for all $x \in [0, \psi]$. We derive bound for $\tilde{J}_k(x)-J_k(x)$ for all $k \ge 1$. We then take $k \to \infty$ to obtain a bound on $\tilde{J}(x)-J(x)$. To derive this bound we first need the following lemma
\begin{lemma}
\label{Lemma:Quadratic-newLemma}
$|\pi^\ast(x)-\tilde{\pi}(x)| \le K |\tilde{p}-p|$, almost surely where 
\[
K=\Bigg[  \frac{8\psi}{(2+d+\sqrt{d^2+4d})^2\sqrt{d^2+4d}}+ 4\psi \frac{4 (2+d) \Bigg(   \frac{(1+d)(2+d)+\frac{1}{\sqrt{d^2+4d}}}{(d+\sqrt{d^2+4d})^2} \Bigg) + \Bigg(\frac{1}{\sqrt{d^2+4d}}\Bigg)}{{(d+\sqrt{d^2+4d})^2}} \Bigg]
\]
\end{lemma}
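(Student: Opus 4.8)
The plan is to treat $\pi^\ast$ and $\tilde\pi$ as one and the same explicit affine map $x \mapsto g(x;q):=\frac{x+\psi-\tfrac12 b_\infty(q)}{1+a_\infty(q)}$, evaluated at the true parameter $q=p$ and at the estimate $q=\tilde p$, and to establish that, uniformly in $x\in[0,\psi]$, this map is Lipschitz in the parameter with modulus $K$. Both $a_\infty(\cdot)$ and $b_\infty(\cdot)$ extend to the closed interval $[0,1]$ via their closed-form expressions, and Lemma~\ref{Quadraticlemma:monotonicity-xk} (which applies verbatim with $\tilde p$ in place of $p$) guarantees that neither $\pi^\ast$ nor $\tilde\pi$ is truncated, so $\pi^\ast(x)-\tilde\pi(x)$ is an honest difference of two ratios and no clipping has to be tracked.

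First I would do the algebra: putting $g(x;p)-g(x;\tilde p)$ over the common denominator $(1+a_\infty(p))(1+a_\infty(\tilde p))$ and regrouping yields the clean decomposition
\[
\pi^\ast(x)-\tilde\pi(x)=\frac{b_\infty(\tilde p)-b_\infty(p)}{2\left(1+a_\infty(\tilde p)\right)}+\frac{\left(a_\infty(\tilde p)-a_\infty(p)\right)\left(x+\psi-\tfrac12 b_\infty(p)\right)}{\left(1+a_\infty(p)\right)\left(1+a_\infty(\tilde p)\right)}.
\]
So it remains to control the elementary quantities that show up: (i) $\bigl|x+\psi-\tfrac12 b_\infty(p)\bigr|\le 2\psi$, since $0\le x\le\psi$ and $0\le b_\infty(p)\le 2\psi$ by Lemma~\ref{Quadraticlemma:monotonicity-ak-bk}(b); (ii) the lower bounds $1+a_\infty(q)\ge 1$ and, using $a_\infty(q)=\frac{d+\sqrt{d^2+4(1+d-q)}}{2}\ge\frac{d+\sqrt{d^2+4d}}{2}$ for $q\le 1$, also $1+a_\infty(q)\ge\frac{2+d+\sqrt{d^2+4d}}{2}$; and (iii) the Lipschitz moduli of $q\mapsto a_\infty(q)$ and $q\mapsto b_\infty(q)$ on $[0,1]$.

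For $a_\infty$, rationalising the difference of square roots gives $|a_\infty(p)-a_\infty(\tilde p)|=\frac{2|p-\tilde p|}{\sqrt{d^2+4(1+d-p)}+\sqrt{d^2+4(1+d-\tilde p)}}\le\frac{|p-\tilde p|}{\sqrt{d^2+4d}}$, each radicand being at least $d^2+4d$; fed into the second term of the decomposition together with (i)--(ii), this already reproduces the $\frac{8\psi}{(2+d+\sqrt{d^2+4d})^2\sqrt{d^2+4d}}$ summand of $K$. For $b_\infty(q)=\frac{2q\,a_\infty(q)\,\psi}{1+a_\infty(q)-q}$ I would invoke the mean value theorem and bound $|b_\infty'(q)|$ by the quotient rule, using the range $\frac{d+\sqrt{d^2+4d}}{2}\le a_\infty(q)\le 1+d$ (Lemma~\ref{Quadraticlemma:monotonicity-ak-bk}(a) and the closed form), $|a_\infty'(q)|\le\frac{1}{\sqrt{d^2+4d}}$, the denominator estimate $1+a_\infty(q)-q\ge a_\infty(q)\ge\frac{d+\sqrt{d^2+4d}}{2}$ so that $(1+a_\infty(q)-q)^2\ge\frac{(d+\sqrt{d^2+4d})^2}{4}$, and $0\le q(1-q)\le 1$; substituting the resulting Lipschitz constant for $b_\infty$ into the first term of the decomposition (and bounding $1+a_\infty(\tilde p)\ge 1$) yields the remaining summand of $K$. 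Finally, since $\tilde p=\tfrac1{\tilde n}\sum_{i\le\tilde n}X_i\in[0,1]$ is a measurable function of $X_1,\dots,X_{\tilde n}$ and the inequality we have proved is a deterministic pointwise statement valid for every pair $p,\tilde p\in[0,1]$, it holds on every sample path, which is the asserted almost-sure bound.

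The main obstacle is step (iii) for $b_\infty$: unlike $a_\infty$, the map $b_\infty$ is a genuinely nonlinear rational function of both $q$ and $a_\infty(q)$, so obtaining a uniform bound on its derivative requires carefully expanding the numerator produced by the quotient rule and repeatedly invoking the sharp bounds on $a_\infty(q)$ and $a_\infty'(q)$; the constant $K$ in the statement is precisely the outcome of pushing these crude-but-uniform estimates through the decomposition, and everything else is routine bookkeeping.
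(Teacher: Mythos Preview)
Your overall strategy—treating $\pi^\ast$ and $\tilde\pi$ as the same explicit map evaluated at two parameter values and bounding the Lipschitz modulus in the parameter—is exactly the paper's. But your algebraic decomposition is \emph{not} the paper's, and this matters because the lemma asserts a \emph{specific} constant $K$, not merely the existence of one. The paper writes the difference over the common denominator $(1+a_\infty(p))(1+a_\infty(\tilde p))$ and groups the numerator as
\[
(x+\psi)\bigl(a_\infty(\tilde p)-a_\infty(p)\bigr)\;+\;\tfrac12\bigl(b_\infty(\tilde p)(1+a_\infty(p))-b_\infty(p)(1+a_\infty(\tilde p))\bigr).
\]
It then splits the second piece further into $b_\infty(\tilde p)-b_\infty(p)$ plus the cross term $b_\infty(\tilde p)a_\infty(p)-b_\infty(p)a_\infty(\tilde p)$, bounds the latter via $b_\infty\le 2\psi$ and $a_\infty\le 1+d$, and—crucially—keeps the \emph{full} denominator $(1+a_\infty(p))(1+a_\infty(\tilde p))$ under both sub-pieces. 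That retained denominator is what produces the outer factor $(d+\sqrt{d^2+4d})^{-2}$ and the inner factor $4(2+d)$ in the second summand of $K$.

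Your decomposition instead puts the $b$-difference over only $2(1+a_\infty(\tilde p))$, and you then discard that denominator via $1+a_\infty(\tilde p)\ge 1$. The resulting bound on your first term is simply $\tfrac12 L_b$, with $L_b$ your Lipschitz constant for $b_\infty$; this is \emph{not} the second summand of $K$, contrary to your claim. Worse, the two constants are incomparable across $d$: for small $d$ yours is much smaller, but already at $d=2$ one computes $\tfrac12 L_b\approx 1.65\psi$ against the paper's second summand $\approx 0.92\psi$, so your total exceeds $K$ and the stated inequality is not established. The easy repair is to retain the sharper bound $1+a_\infty(\tilde p)\ge\tfrac12(2+d+\sqrt{d^2+4d})$ (which you already list in (ii)) in your first term; a short computation then shows $(d+\sqrt{d^2+4d})^2\le 2(2+d)\bigl(2+d+\sqrt{d^2+4d}\bigr)$, whence your constant is dominated by $K$ for all $d>0$. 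To reproduce $K$ verbatim, however, you must follow the paper's grouping and handle the cross term $b_\infty(\tilde p)a_\infty(p)-b_\infty(p)a_\infty(\tilde p)$ explicitly.
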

\begin{proof}
See Appendix~\ref{Appendix:newLemma}
\end{proof}
\begin{lemma}
\label{lemma:result-a}
For all $x_1,x_2 \in [0,\psi]$ the following holds
\[
|\tilde{\pi}(x_1)-\pi^\ast(x_2)| \le K.|\tilde{p}-p|+z|x_1-x_2|,
\]
almost surely,where $z=\frac{1}{1+d}$.
\end{lemma}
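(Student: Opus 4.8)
The plan is to derive this bound directly from Lemma~\ref{Lemma:Quadratic-newLemma} together with the explicit linear form of $\pi^\ast$ given in Theorem~\ref{Quadraticthm:optimal-policy}. The key observation is that $\pi^\ast$ is an affine function with slope $\frac{1}{1+a_\infty}$, and by Lemma~\ref{Quadraticlemma:monotonicity-ak-bk}$(a)$ we have $a_\infty = \frac{d+\sqrt{d^2+4(1+d-p)}}{2} \ge \frac{d+\sqrt{d^2+4d}}{2} \ge d$ (using $p \le 1$), so the slope is at most $\frac{1}{1+d} = z$. This Lipschitz property of $\pi^\ast$ in its argument is what produces the $z|x_1 - x_2|$ term.

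First I would write, using the triangle inequality,
\[
|\tilde{\pi}(x_1) - \pi^\ast(x_2)| \le |\tilde{\pi}(x_1) - \pi^\ast(x_1)| + |\pi^\ast(x_1) - \pi^\ast(x_2)|.
\]
The first term on the right is bounded by $K|\tilde{p} - p|$ almost surely, which is exactly the content of Lemma~\ref{Lemma:Quadratic-newLemma} applied at the point $x_1$. For the second term, I would use the closed form
\[
\pi^\ast(x_1) - \pi^\ast(x_2) = \frac{x_1 - x_2}{1 + a_\infty},
\]
so that $|\pi^\ast(x_1) - \pi^\ast(x_2)| = \frac{|x_1 - x_2|}{1 + a_\infty} \le \frac{|x_1 - x_2|}{1+d} = z|x_1 - x_2|$, where the inequality uses $a_\infty \ge d$. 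Combining the two bounds yields the claim.

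The only genuine step requiring care is justifying $a_\infty \ge d$, which follows because $d^2 + 4(1+d-p) \ge d^2 + 4d$ for $p \le 1$ and hence $a_\infty = \frac{d + \sqrt{d^2+4(1+d-p)}}{2} \ge \frac{d + \sqrt{d^2+4d}}{2} \ge \frac{2d}{2} = d$; I would state this explicitly since it is what pins down the constant $z = \frac{1}{1+d}$. Everything else is the triangle inequality and substitution of the affine formula, so there is no serious obstacle here — the lemma is essentially a corollary of Lemma~\ref{Lemma:Quadratic-newLemma} and the linearity established in Theorem~\ref{Quadraticthm:optimal-policy}. I would also remark that the bound holds almost surely because the only randomness enters through $\tilde{p}$ (hence through $\tilde{\pi}$), and the inequality from Lemma~\ref{Lemma:Quadratic-newLemma} is itself an almost-sure statement.
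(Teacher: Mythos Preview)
Your proposal is correct and follows essentially the same route as the paper: both split $|\tilde{\pi}(x_1)-\pi^\ast(x_2)|$ via the triangle inequality through $\pi^\ast(x_1)$, invoke Lemma~\ref{Lemma:Quadratic-newLemma} for the first piece, and use the affine form of $\pi^\ast$ together with the bound $a_\infty \ge d$ (equivalently $1+a_\infty \ge 1+d$) for the second. The only cosmetic difference is that the paper bounds $\sqrt{d^2+4(1+d-p)} \ge d$ directly, whereas you pass through the intermediate $\sqrt{d^2+4d}$; both yield $\frac{1}{1+a_\infty} \le \frac{1}{1+d} = z$.
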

\begin{proof}
See Appendix~\ref{Appendix:result-a}
\end{proof}
Let us define the following notation $\forall~n \in \{0,1,,\dots\}$ which is required to proceed further.
\begin{align*}
   \tilde{\pi}^n(x)&=\tilde{\pi}\underbrace{(\tilde{\pi}(\dots\tilde{\pi}(x))\dots)}_{n \text{ times}}\\
   {\pi^\ast}^n(x)&={\pi^\ast}\underbrace{({\pi^\ast}(\dots{\pi^\ast}(x))\dots)}_{n \text{ times}}
\end{align*}
Note that by the above definition the following holds $\forall~n \in \{0,1,\dots\}$
\begin{align*}
    \tilde{\pi}^n(x)&=\tilde{\pi}(\tilde{\pi}^{n-1}(x))\\
    {\pi^\ast}^n(x)&={\pi^\ast}({\pi^\ast}^{n-1}(x))
\end{align*}
The following lemma bounds $|\tilde{\pi}(\tilde{\pi}^{n-1}(x))-\pi^\ast({\pi^\ast}^{n-1}(x))|$. This bound plays a crucial role in deriving on the bound on the cost differences.
\begin{lemma}
\label{lemma:result-b}
For all $n \ge 0$,
\[
|\tilde{\pi}(\tilde{\pi}^{n-1}(x))-\pi^\ast({\pi^\ast}^{n-1}(x))|=|\tilde{\pi}^n(x)-{\pi^\ast}^n(x)|\le \sum_{i=0}^n z^i K|\tilde{p}-p|
\]
almost surely.
\end{lemma}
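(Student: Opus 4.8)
The plan is to prove this by induction on $n$, leveraging Lemma~\ref{lemma:result-a} as the one-step contraction-type estimate. Before starting the induction I would note that both maps $\tilde{\pi}$ and $\pi^\ast$ send $[0,\psi]$ into $[0,\psi]$ (this is exactly what Lemma~\ref{Quadraticlemma:monotonicity-xk} guarantees for $\pi^\ast$, and the same argument applies to $\tilde{\pi}$ since it has the identical functional form with $p$ replaced by $\tilde p$), so all the iterates $\tilde{\pi}^{m}(x)$ and ${\pi^\ast}^{m}(x)$ stay in $[0,\psi]$ and Lemma~\ref{lemma:result-a} is applicable at every stage.

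For the base case $n=0$ we have $\tilde{\pi}^0(x)={\pi^\ast}^0(x)=x$, so the left-hand side is $0$, which is trivially bounded by $\sum_{i=0}^0 z^i K|\tilde p - p| = K|\tilde p - p| \ge 0$. (If one prefers to start at $n=1$, the inequality $|\tilde{\pi}(x)-\pi^\ast(x)|\le K|\tilde p - p|$ is exactly Lemma~\ref{Lemma:Quadratic-newLemma}, and $K|\tilde p-p|\le (1+z)K|\tilde p-p|$.)

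For the inductive step, suppose the claim holds for $n-1$. Using the identities $\tilde{\pi}^n(x)=\tilde{\pi}(\tilde{\pi}^{n-1}(x))$ and ${\pi^\ast}^n(x)=\pi^\ast({\pi^\ast}^{n-1}(x))$, apply Lemma~\ref{lemma:result-a} with $x_1=\tilde{\pi}^{n-1}(x)$ and $x_2={\pi^\ast}^{n-1}(x)$ to get
\[
|\tilde{\pi}^n(x)-{\pi^\ast}^n(x)| \le K|\tilde p - p| + z\,|\tilde{\pi}^{n-1}(x)-{\pi^\ast}^{n-1}(x)|.
\]
Now substitute the induction hypothesis $|\tilde{\pi}^{n-1}(x)-{\pi^\ast}^{n-1}(x)|\le \sum_{i=0}^{n-1} z^i K|\tilde p - p|$, which yields
\[
|\tilde{\pi}^n(x)-{\pi^\ast}^n(x)| \le K|\tilde p - p| + \sum_{i=1}^{n} z^i K|\tilde p - p| = \sum_{i=0}^{n} z^i K|\tilde p - p|,
\]
completing the induction. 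Since every inequality invoked (Lemmas~\ref{Lemma:Quadratic-newLemma}, \ref{lemma:result-a}) holds almost surely, so does the conclusion.

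There is essentially no deep obstacle here; the lemma is a routine telescoping induction once Lemma~\ref{lemma:result-a} is in hand. The only point requiring a moment's care is the bookkeeping: making sure the iterates never leave $[0,\psi]$ so that Lemma~\ref{lemma:result-a} legitimately applies at each step, and correctly re-indexing the geometric sum ($z\sum_{i=0}^{n-1}z^i = \sum_{i=1}^{n}z^i$) when absorbing the hypothesis.
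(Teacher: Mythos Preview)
Your proof is correct and follows essentially the same induction argument as the paper: both apply Lemma~\ref{lemma:result-a} with $x_1=\tilde{\pi}^{n-1}(x)$, $x_2={\pi^\ast}^{n-1}(x)$ and re-index the resulting geometric sum. The only cosmetic difference is in the base case (you argue $|\tilde{\pi}^0(x)-{\pi^\ast}^0(x)|=0$ trivially, whereas the paper invokes Lemma~\ref{lemma:result-a} at $x_1=x_2=x$), and you add the useful remark that the iterates remain in $[0,\psi]$, which the paper leaves implicit.
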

\begin{proof}
See Appendix~\ref{appendix:result-b}
\end{proof}
The following lemma bounds the cost difference of $0$ stage problem.
\begin{lemma}
\label{lemma:result-c}
For all $y \in \{0,1,2,\dots\}$,
\[
\tilde{J}_0(\tilde{\pi}^y(x))-J_0({\pi^\ast}^y(x)) \le K'|\tilde{p}-p|,
\]
almost surely, where $K'=\frac{2\psi K (5+2d)}{(1-z)}$.
\end{lemma}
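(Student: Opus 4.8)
The plan is to unwind the definitions of $\tilde J_0$ and $J_0$ from \eqref{Quadraticeqn:J0-pi-tilda} and \eqref{Quadraticeqn:J0-pi}, write the difference as a sum of three differences of squares, and bound each one using $u^2-v^2=(u-v)(u+v)$ together with the two perturbation lemmas already established. To lighten notation, set $a = {\pi^\ast}^y(x)$, $\tilde a = \tilde\pi^y(x)$, $b = {\pi^\ast}^{y+1}(x) = \pi^\ast(a)$ and $\tilde b = \tilde\pi^{y+1}(x) = \tilde\pi(\tilde a)$. Then
\begin{align*}
\tilde J_0(\tilde a) - J_0(a) &= \bigl[(\psi - \tilde b + \tilde a)^2 - (\psi - b + a)^2\bigr] \\
&\quad + d(\tilde a^2 - a^2) + (1+d)(\tilde b^2 - b^2).
\end{align*}

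Next I would bound the three brackets. Because $\tilde\pi$ is the optimal policy for the perturbed parameter $\tilde p$, Lemma~\ref{Quadraticlemma:monotonicity-xk} applies verbatim to $\tilde\pi$ as well as to $\pi^\ast$, so all iterates ${\pi^\ast}^j(x),\tilde\pi^j(x)$ stay in $[0,\psi]$; hence $a,\tilde a,b,\tilde b\in[0,\psi]$ and $\psi-b+a,\ \psi-\tilde b+\tilde a\in[0,2\psi]$. Therefore the ``sum'' factors obey $|(\psi-\tilde b+\tilde a)+(\psi-b+a)|\le 4\psi$, $|\tilde a+a|\le 2\psi$, $|\tilde b+b|\le 2\psi$, while the ``difference'' factors are controlled by the triangle inequality, $|(\tilde a-a)-(\tilde b-b)|\le|\tilde a-a|+|\tilde b-b|$. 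Collecting terms gives
\[
\tilde J_0(\tilde a) - J_0(a) \le 2\psi(2+d)\,|\tilde a-a| + 2\psi(3+d)\,|\tilde b-b|.
\]
Now I invoke Lemma~\ref{lemma:result-b} with $n=y$ and $n=y+1$: $|\tilde a-a|=|\tilde\pi^y(x)-{\pi^\ast}^y(x)|\le\sum_{i=0}^y z^iK|\tilde p-p|\le\frac{K}{1-z}|\tilde p-p|$, and similarly $|\tilde b-b|\le\frac{K}{1-z}|\tilde p-p|$, where I used $\sum_{i=0}^\infty z^i=\frac1{1-z}$ (valid since $z=\frac1{1+d}\in(0,1)$). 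Substituting and factoring out $\frac{2\psi K}{1-z}|\tilde p-p|$ leaves the scalar $(2+d)+(3+d)=5+2d$, i.e.\ $\tilde J_0(\tilde\pi^y(x)) - J_0({\pi^\ast}^y(x)) \le \frac{2\psi K(5+2d)}{1-z}|\tilde p-p| = K'|\tilde p-p|$, and since every inequality invoked (Lemmas~\ref{lemma:result-a}--\ref{lemma:result-b}) holds almost surely, so does the conclusion.

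The calculation is essentially routine algebra once the two perturbation lemmas are in hand; the only points that need care are (i) confirming that the iterates of \emph{both} policies remain in $[0,\psi]$, so that the crude $O(\psi)$ bounds on the sum-factors are legitimate — this is exactly Lemma~\ref{Quadraticlemma:monotonicity-xk} applied to the perturbed sequences — and (ii) observing that the partial geometric sums $\sum_{i=0}^y z^i$ are bounded by $\frac1{1-z}$ \emph{uniformly in $y$}, which is what makes $K'$ a genuine constant rather than something growing with the iteration depth $y$. This uniformity is the feature that will later let us pass to $k\to\infty$ in the cost recursion.
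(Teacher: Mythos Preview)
Your proof is correct and follows essentially the same route as the paper: the paper derives the same intermediate inequality $\tilde J_0(x_1)-J_0(x_2)\le 2\psi\{(d+2)|x_1-x_2|+(3+d)|\tilde\pi(x_1)-\pi^\ast(x_2)|\}$, applies it at $x_1=\tilde\pi^y(x)$, $x_2={\pi^\ast}^y(x)$, invokes Lemma~\ref{lemma:result-b} for $n=y$ and $n=y+1$, and then bounds the two partial geometric sums by $\frac{1}{1-z}$. Your write-up is in fact more explicit than the paper's on two points the paper leaves implicit: the $u^2-v^2=(u-v)(u+v)$ bookkeeping that produces the coefficients $(2+d)$ and $(3+d)$, and the observation that Lemma~\ref{Quadraticlemma:monotonicity-xk} applies equally to $\tilde\pi$ so that all iterates stay in $[0,\psi]$.
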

\begin{proof}
See Appendix~\ref{Appendix:result-c}
\end{proof}
\begin{lemma}
\label{lemma:result-II}
For all $n\in\{1,2,..,k-1\}$, almost surely
\[
\tilde{J}_{k-n}(\tilde{\pi}^{n-1}(x))-{J}_{k-n}({\pi^\ast}^{n-1}(x)) \le \Bigg[(A+B)\sum_{j=0}^{k-n}p^j \sum_{i=0}^{n-1+j}z^i+Az^n \sum_{j=0}^{k-n} (pz)^j+p^{k-n}K'\Bigg]|\tilde{p}-p|, 
\]
where
\begin{align*}
    A &= 2K\psi (2+(1-p)(1+d)), \text{ and }\\
    B &= 2K\psi(2+d)\\
   \end{align*}
\end{lemma}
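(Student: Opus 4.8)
The plan is to work with the shorthand $\Delta_n \coloneqq \tilde{J}_{k-n}(\tilde{\pi}^{n-1}(x))-{J}_{k-n}({\pi^\ast}^{n-1}(x))$ for $n=1,\dots,k$, derive a one-step recursion in $n$ for $\Delta_n$, unroll it down to $\Delta_k$ (which Lemma~\ref{lemma:result-c} controls), and bound the per-stage terms that appear using Lemma~\ref{lemma:result-b}.

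First I would expand both cost functions by their defining recursions. Since $n\le k-1$ we have $k-n\ge 1$, so \eqref{Quadraticeqn:Jk-pi-tilda} applies to $\tilde{J}_{k-n}$ at the point $\tilde{\pi}^{n-1}(x)$ and \eqref{Quadraticeqn:Jk-pi} applies to ${J}_{k-n}$ at ${\pi^\ast}^{n-1}(x)$. Because $\tilde{\pi}(\tilde{\pi}^{n-1}(x))=\tilde{\pi}^{n}(x)$ and $\pi^\ast({\pi^\ast}^{n-1}(x))={\pi^\ast}^{n}(x)$, the continuation terms become exactly $p\tilde{J}_{k-(n+1)}(\tilde{\pi}^{(n+1)-1}(x))$ and $p{J}_{k-(n+1)}({\pi^\ast}^{(n+1)-1}(x))$; subtracting gives $\Delta_n=S_n+p\,\Delta_{n+1}$ for $n=1,\dots,k-1$, where $S_n$ is the difference of per-stage costs,
\begin{align*}
S_n \coloneqq{}& \big[(\psi-\tilde{\pi}^{n}(x)+\tilde{\pi}^{n-1}(x))^2-(\psi-{\pi^\ast}^{n}(x)+{\pi^\ast}^{n-1}(x))^2\big]+ d\big[(\tilde{\pi}^{n-1}(x))^2-({\pi^\ast}^{n-1}(x))^2\big]\\
&{}+ (1-p)(1+d)\big[(\tilde{\pi}^{n}(x))^2-({\pi^\ast}^{n}(x))^2\big].
\end{align*}
Iterating this recursion and invoking $\Delta_k=\tilde{J}_0(\tilde{\pi}^{k-1}(x))-{J}_0({\pi^\ast}^{k-1}(x))\le K'|\tilde{p}-p|$ (Lemma~\ref{lemma:result-c}, with $y=k-1$) gives
\[
\Delta_n=\sum_{j=0}^{k-1-n}p^j S_{n+j}+p^{k-n}\Delta_k\;\le\;\sum_{j=0}^{k-1-n}p^j S_{n+j}+p^{k-n}K'|\tilde{p}-p|.
\]

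The core estimate is a per-stage bound $S_m\le\big(A\sum_{i=0}^{m}z^i+B\sum_{i=0}^{m-1}z^i\big)|\tilde{p}-p|$ for $m\ge 1$. A short induction (using Lemma~\ref{Quadraticlemma:monotonicity-xk}, applied both with $p$ and with $\tilde{p}$, together with Theorem~\ref{Quadraticthm:optimal-policy}) shows every iterate $\tilde{\pi}^{m-1}(x),\tilde{\pi}^{m}(x),{\pi^\ast}^{m-1}(x),{\pi^\ast}^{m}(x)$ lies in $[0,\psi]$, so each squared service quantity lies in $[0,2\psi]$. Writing $a^2-b^2=(a-b)(a+b)$ for the service term, $|a+b|\le 4\psi$ and $|a-b|\le|\tilde{\pi}^{m-1}(x)-{\pi^\ast}^{m-1}(x)|+|\tilde{\pi}^{m}(x)-{\pi^\ast}^{m}(x)|$; treating the two waiting-cost differences the same way (each carrying a sum factor bounded by $2\psi$) and applying Lemma~\ref{lemma:result-b} to each iterate difference, one collects coefficients: the service term contributes $4K\psi\sum_{i=0}^{m}z^i+4K\psi\sum_{i=0}^{m-1}z^i$ (from the two iterate differences in $|a-b|$), the $d(\cdot)^2$ waiting term contributes $2dK\psi\sum_{i=0}^{m-1}z^i$, and the $(1-p)(1+d)(\cdot)^2$ waiting term contributes $2(1-p)(1+d)K\psi\sum_{i=0}^{m}z^i$, which together equal exactly $A\sum_{i=0}^{m}z^i+B\sum_{i=0}^{m-1}z^i$ with $A,B$ as defined. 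Substituting into the iterated inequality, enlarging the outer summation range from $k-1-n$ to $k-n$ (harmless, since the added summand is nonnegative), splitting $A\sum_{i=0}^{n+j}z^i=A\sum_{i=0}^{n-1+j}z^i+Az^{n+j}$, and using $z^{n}(pz)^{j}=p^{j}z^{n+j}$, the bound rearranges into $\big[(A+B)\sum_{j=0}^{k-n}p^j\sum_{i=0}^{n-1+j}z^i+Az^n\sum_{j=0}^{k-n}(pz)^j+p^{k-n}K'\big]|\tilde{p}-p|$, as claimed.

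The main obstacle is the per-stage estimate $S_m$: one has to verify that no policy iterate is ever truncated at $\psi$ — so that the squared service quantities genuinely lie in $[0,2\psi]$ and the geometric bounds of Lemma~\ref{lemma:result-b} apply verbatim — and the bookkeeping of which iterate differences contribute a $\sum_{i=0}^{m}z^i$ factor versus a $\sum_{i=0}^{m-1}z^i$ factor must be matched precisely against the definitions of $A$ and $B$. The remaining steps — unrolling the recursion and regrouping the nested geometric sums — are routine.
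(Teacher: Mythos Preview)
Your proposal is correct and follows essentially the same route as the paper: expand $\tilde{J}_{k-n}$ and $J_{k-n}$ via~\eqref{Quadraticeqn:Jk-pi-tilda} and~\eqref{Quadraticeqn:Jk-pi}, bound the per-stage difference by $\frac{A}{K}|\tilde{\pi}^{n}(x)-{\pi^\ast}^{n}(x)|+\frac{B}{K}|\tilde{\pi}^{n-1}(x)-{\pi^\ast}^{n-1}(x)|$ (exactly your $S_n$ after invoking Lemma~\ref{lemma:result-b}), pick up the continuation $p\,\Delta_{n+1}$, and iterate down to Lemma~\ref{lemma:result-c}. Your writeup is in fact more careful than the paper's in two places---you explicitly note why the iterates stay in $[0,\psi]$ so that the $2\psi$ and $4\psi$ envelope bounds are legitimate, and you make the extension of the outer sum from $k-1-n$ to $k-n$ and the split $A\sum_{i=0}^{n+j}z^i=A\sum_{i=0}^{n-1+j}z^i+Az^{n+j}$ explicit, whereas the paper compresses this into ``iteratively using~\eqref{eqn:base}\dots the lemma holds.''
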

\begin{proof}
See Appendix~\ref{Appendix:result-II}
\end{proof}

The major theorem that bounds the cost difference is as follows
\begin{theorem}
\label{theorem:bound-unknown}
For fixed $\epsilon,h$, there exists a $\tilde{n}=-\frac{1}{2\epsilon^2}\log{\frac{1-h}{2}}$  such that $\forall n \ge \tilde{n}$, with a probability of atleast $h$ the following holds
    \[
    \tilde{J}(x)-J(x) \le \Bigg[A+\frac{A+B}{(1-z)(1-\tilde{p}-\epsilon)}+\frac{Az}{1-(\tilde{p}+\epsilon)z}\Bigg]\epsilon
    \]
\end{theorem}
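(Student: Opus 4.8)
The plan is to control $\tilde{J}_k(x)-J_k(x)$ uniformly in $x$ for each finite horizon $k$ and then pass to the limit $k\to\infty$, using the already-noted facts that $J_k(x)\to J(x)$ and $\tilde{J}_k(x)\to\tilde{J}(x)$ pointwise. The first move is a single unrolling of the recursions~\eqref{Quadraticeqn:Jk-pi} and~\eqref{Quadraticeqn:Jk-pi-tilda}. Subtracting them, the stage cost $dx^2$ cancels and the service/waiting terms combine as
\[
\tilde{J}_k(x)-J_k(x)=(\pi^\ast(x)-\tilde{\pi}(x))\bigl(2(\psi+x)-\tilde{\pi}(x)-\pi^\ast(x)\bigr)+(1-p)(1+d)(\tilde{\pi}(x)-\pi^\ast(x))(\tilde{\pi}(x)+\pi^\ast(x))+p\bigl[\tilde{J}_{k-1}(\tilde{\pi}(x))-J_{k-1}(\pi^\ast(x))\bigr].
\]
Since $0\le\pi^\ast(x),\tilde{\pi}(x)\le\psi$ and $0\le x\le\psi$, the two non-recursive terms are bounded in absolute value by $\bigl(4\psi+2\psi(1-p)(1+d)\bigr)\,|\pi^\ast(x)-\tilde{\pi}(x)|$, and Lemma~\ref{Lemma:Quadratic-newLemma} turns this into $A\,|\tilde p-p|$ with $A=2K\psi\bigl(2+(1-p)(1+d)\bigr)$, exactly the constant in Lemma~\ref{lemma:result-II}. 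Hence, for every $x\in[0,\psi]$,
\[
\tilde{J}_k(x)-J_k(x)\;\le\;A\,|\tilde p-p|\;+\;p\bigl[\tilde{J}_{k-1}(\tilde{\pi}(x))-J_{k-1}(\pi^\ast(x))\bigr].
\]

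Next I would bound the residual term using Lemma~\ref{lemma:result-II}. Writing $\tilde{\pi}(x)=\tilde{\pi}^{1}(x)$ and $\pi^\ast(x)={\pi^\ast}^{1}(x)$, the quantity $\tilde{J}_{k-1}(\tilde{\pi}(x))-J_{k-1}(\pi^\ast(x))$ is precisely the left-hand side of Lemma~\ref{lemma:result-II} with the roles of ``$k$'' and ``$n$'' there played by $k+1$ and $2$ (valid for $k\ge2$). Substituting its bound and multiplying through by $p$ gives
\[
\tilde{J}_k(x)-J_k(x)\;\le\;\Bigl[A+p(A+B)\!\sum_{j=0}^{k-1}p^j\!\sum_{i=0}^{1+j}z^i+pAz^2\!\sum_{j=0}^{k-1}(pz)^j+p^{k}K'\Bigr]|\tilde p-p|.
\]

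Now I would let $k\to\infty$: the left-hand side tends to $\tilde{J}(x)-J(x)$, the term $p^kK'\to0$, the inner sum satisfies $\sum_{i=0}^{1+j}z^i\le\frac{1}{1-z}$ (recall $z=\frac{1}{1+d}\in(0,1)$), so $\sum_{j\ge0}p^j\sum_{i=0}^{1+j}z^i\le\frac{1}{(1-z)(1-p)}$, and $\sum_{j\ge0}(pz)^j=\frac{1}{1-pz}$ since $pz<1$. Coarsening with $p\le1$ and $pz^2\le z$ to tidy the coefficients yields, almost surely,
\[
\tilde{J}(x)-J(x)\;\le\;\Bigl[A+\frac{A+B}{(1-z)(1-p)}+\frac{Az}{1-pz}\Bigr]|\tilde p-p|.
\]
Finally I would invoke the estimation guarantee of Section~\ref{Quadraticsec:unknown}: for $\tilde n=-\frac{1}{2\epsilon^2}\log\frac{1-h}{2}$ and any $n\ge\tilde n$, the event $\{|\tilde p-p|\le\epsilon\}$ has probability at least $h$; on this event $|\tilde p-p|\le\epsilon$, $1-p\ge1-\tilde p-\epsilon$, and $1-pz\ge1-(\tilde p+\epsilon)z$, so the last display becomes the asserted bound.

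The genuine difficulty is packed into Lemma~\ref{lemma:result-II} (together with Lemmas~\ref{Lemma:Quadratic-newLemma}--\ref{lemma:result-c}), which I am taking as given; with that in hand the remaining steps are essentially bookkeeping. The only points demanding care are matching $\tilde{J}_{k-1}(\tilde{\pi}(x))-J_{k-1}(\pi^\ast(x))$ to the exact index pattern of Lemma~\ref{lemma:result-II}, the harmless coarsening ($p\le1$, $pz^2\le z$) that produces the clean closed form, and observing that every inequality before the last step is pointwise/almost sure so that the concluding high-probability statement is immediate.
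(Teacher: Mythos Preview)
Your proof is correct and follows the same approach as the paper: unroll the recursions once to isolate $A\,|\tilde p-p|$, invoke Lemma~\ref{lemma:result-II} on the residual $\tilde{J}_{k-1}(\tilde{\pi}(x))-J_{k-1}(\pi^\ast(x))$, pass to the limit $k\to\infty$, and finish with the Hoeffding guarantee together with $1-p\ge 1-\tilde p-\epsilon$. The only cosmetic discrepancy is the index matching in Lemma~\ref{lemma:result-II}: the paper's convention has $\tilde{\pi}^{0}=\tilde{\pi}$ (so it applies the lemma with $n=1$ and the same $k$), whereas you read $\tilde{\pi}^{1}=\tilde{\pi}$ and hence shift to $(k+1,2)$; both readings lead, after your coarsening $p\le1$ and $pz^2\le z$, to the identical closed form.
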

\begin{proof}
See Appendix~\ref{Appendix:bound-unknown}
\end{proof}

\color{black}

\section{Numerical Evaluation}
We now discuss the effect of the waiting cost structure on the scheduling policies, deferred services and costs. We also compare the impact of performance criteria~(optimal scheduling vs strategic scheduling by selfish agents).   

Let us revisit the optimal policies and Nash equilibria in Figures~\ref{Quadraticfig:Quadratic-image1} and~\ref{Quadraticfig:Quadratic-image2}. Recall that we had chosen $\psi = 2, d= 1,$ and  $p = 0.5, 0.85$ and $1$.   Notice that for the same parameters, amount of deferred service under the optimal policy is more sensitive to pending service than amount of deferred service under the Nash equilibrium. The equilibria are not as sensitive to $p$ as the optimal policies. 
We show histograms of pending services seen by the jobs for both optimal policies and Nash equilibria in Figure~\ref{Quadraticfig:Quadratic-image3,4}. We use $p  = 0.5$ and $p = 0.85$ for upper and lower subfigures respectively. In both the plots, $(1-p)$ fraction of jobs see $y_0 = 0$ pending service, and for $k \geq 1$, $p^k (1-p)$ fraction of jobs see $y_k = \pi(y_{k-1})$ pending service~($\pi \equiv \pi^\ast$ for an optimal policy whereas $\pi \equiv \pi'$ for a Nash equilibrium). For all $k \geq 0$, $y_k$ are upper bounded by the fixed point of $\pi(x) = x$. For $p = 0.85$, under Nash equilibrium the system attains a
steady state wherein each user observes a pending service $= 0.62$ (the fixed point of $\pi'(x)=x$
in Figure~\ref{Quadraticfig:Quadratic-image2} and defers the same amount of service). Hence we see a big mass at the fixed point of  $\pi'(x) = x$. Following the same reason we see a mass at the fixed point of $\pi^\ast(x) = x$. 

Finally, in Figure~\ref{Quadraticfig:Quadratic-image5}, we show variation of time-average cost under both optimal policy and Nash equilibrium as $p$ is varied from $0$ to $1$. We consider two sets of other parameters, $\psi = 2, d =1$ and $\psi = 2.5, d = 1.5$. For  $p = 1$, no service is deferred in any slot under the optimal policy, and hence the optimal average cost is $\psi^2$. For $\psi = 2, d =1$ and $p =1$, under the Nash equilibrium, $\psi$ service is offered and $0.5232$ service is deferred in each slot, and hence the average cost is $2^2 + 0.5232^2$. The efficiency loss is $1$ for $p \gtrsim 0$ and $1 + \frac{0.5232^2}{2^2}=1.0684$ for $p =1$. We make similar observations for $\psi =2.5, d = 1.5$. 
\remove{
%\begin{subfigures}
%	\centering
\begin{figure}[!ht]	
\centering
	\includegraphics[width=0.5\textwidth]{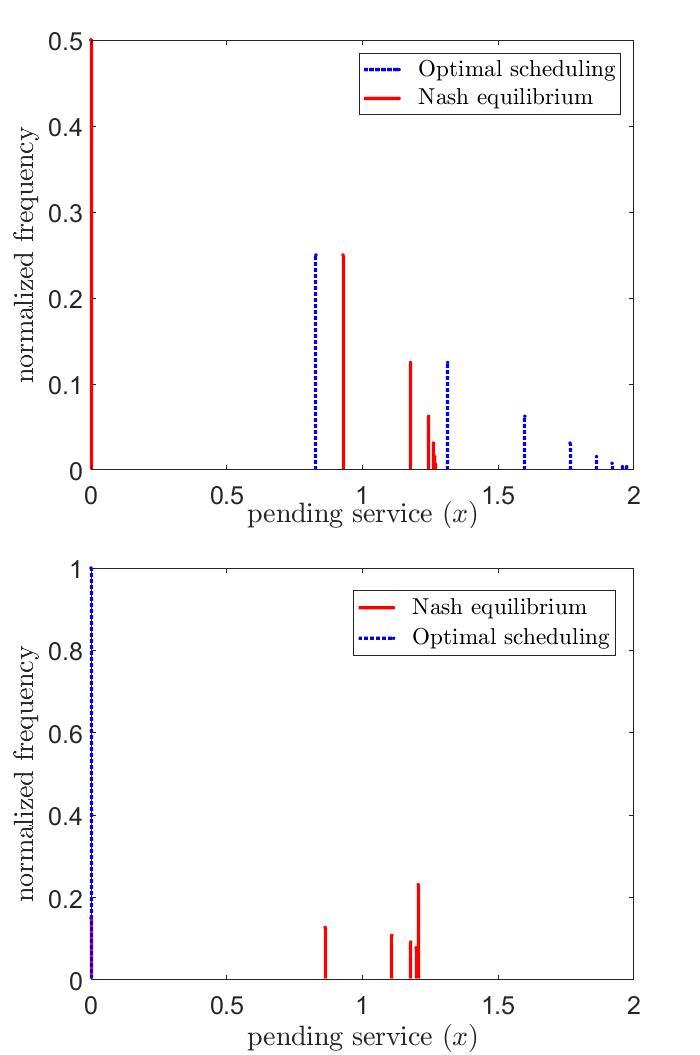}
	\caption{Fixed waiting costs: histogram of the pending services seen by the jobs for $\psi=2,d=1,p=0.5$(top subfigure) and $p=0.85$(bottom subfigure).}
	\label{Quadraticfig:fixed-image3,4}
\end{figure}

\begin{figure}[!ht]
	
	\centering
	
	\includegraphics[width=0.5\textwidth]{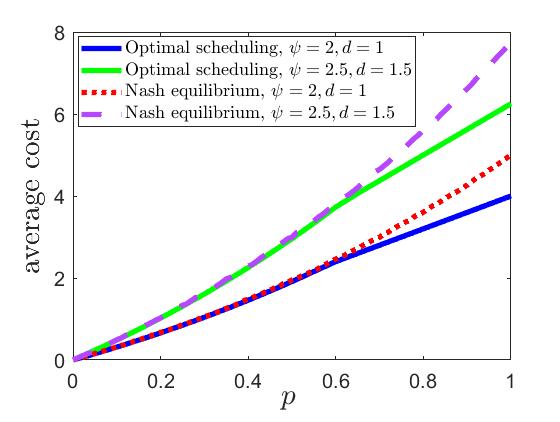}
	\caption{Fixed waiting costs: average cost vs. $p$ for $\psi=2,d=1$ and $\psi=2.5,d=1.5$.}
	\label{Quadraticfig:fixed-image5}
\end{figure}
%\begin{subfigures}
}
\begin{figure}[!ht]
	
	\centering
	
	\includegraphics[width=0.5\textwidth]{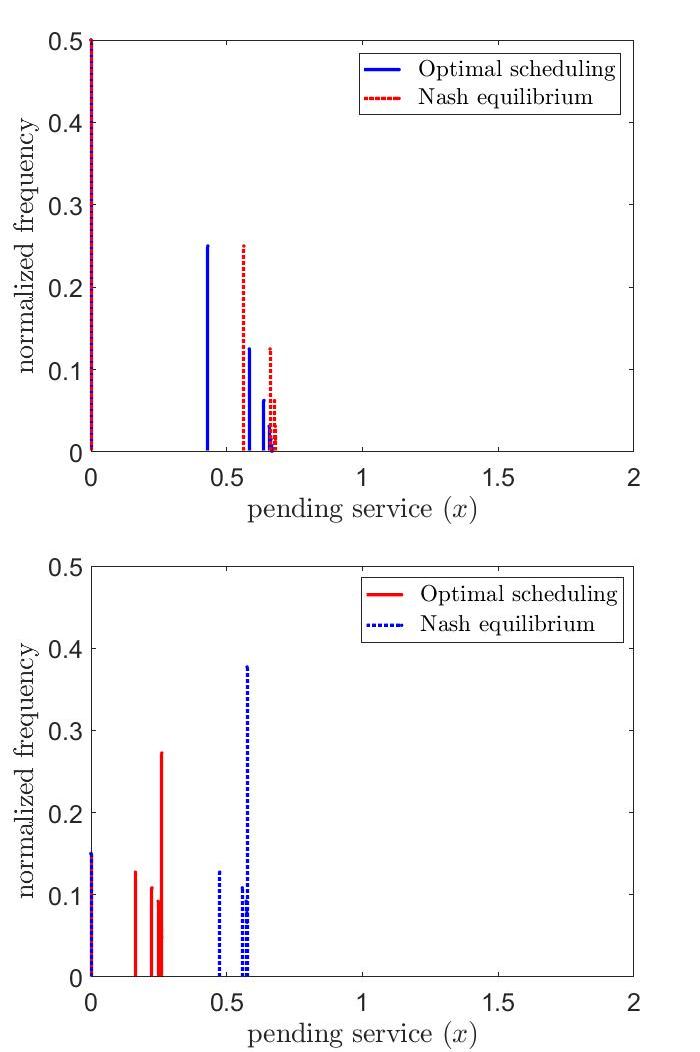}
	\caption{Quadratic waiting costs: histogram of the pending services seen by the jobs for $\psi=2,d=1,p=0.5$(top subfigure) and $p=0.85$(bottom subfigure).}
	\label{Quadraticfig:Quadratic-image3,4}
\end{figure}
%\begin{figure}[!ht]
	
%	\centering
	
%	\includegraphics[width=0.3\textwidth]{Qimage4.jpg}
%	\caption{Quadratic waiting costs: histogram of the pending services seen by the jobs for $\psi=2,d=1,p=0.85$.}
%	\label{Quadraticfig:Quadratic-image4}
%\end{figure}
%\end{subfigures}
\begin{figure}[!ht]
	
	\centering
	
	\includegraphics[width=0.5\textwidth]{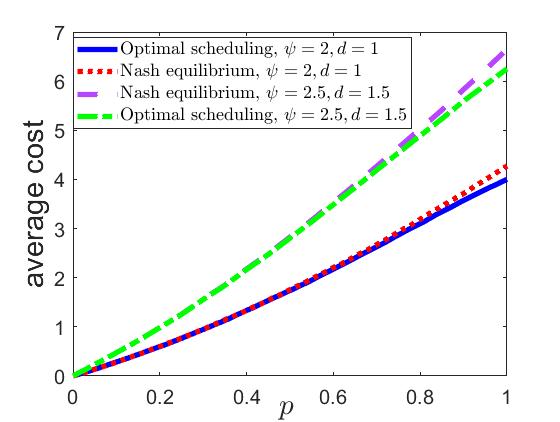}
	\caption{Quadratic waiting costs: average cost vs. $p$ for $\psi=2,d=1$ and $\psi=2.5,d=1.5$.}
	\label{Quadraticfig:Quadratic-image5}
\end{figure}
\section{Conclusion}
We studied service scheduling in slotted systems with Bernoulli request arrivals, quadratic service costs, quadratic waiting costs and
service delay guarantee of two slots.
Initially we study the case of jobs with identical service
requirements and we provided explicit optimal policy~(Theorem~\ref{Quadraticthm:optimal-policy}). We also gave the algorithm to compute the optimal policy if request could have different service requirements~(Algorithm~\ref{alg:two-service-requirements}). For competing requests, with identical
service requirements, we derived a symmetric Nash equilibrium~(Theorem~\ref{Quadraticthm:quad-nash-equilibrium}). To address the issue of unknown system parameters, we propose an algorithm to estimate them. We also bound the cost difference of the actual cost incurred and the cost incurred using estimated parameters~(Theorem~\ref{theorem:bound-unknown}).

%\subsubsection*{Acknowledgments:}
\bibliography{references}

\remove{

\bibliography{main}
}
%\bibliographystyle{ieeetr}
%\bibliography{references}
%\appendices
%\def\thesubsectiondis{\Roman{subsection}.}
%\newtheorem{apptheorem}{Theorem}[subsection]
%\newtheorem{applemma}[apptheorem]{Lemma}
%\bibliographystyle{IEEEtran}
%\bibliography{references}

%\appendices
%\def\thesectiondis{\Roman{section}}
%\newtheorem{apptheorem}{Theorem}[subsection]
%\newtheorem{applemma}[apptheorem]{Lemma}
\begin{appendix}
\subsection{Proof of Lemma~\ref{Quadraticlemma:monotonicity-ak-bk}}
\label{Quadraticapp:monotonicity-ak-bk}
$(a)$ Notice the mapping $a \mapsto 1+d-\frac{p}{1+a}$ is monotonically increasing.
Further, $a^\ast_0 = 1+d$ and $a^\ast_1 = 1+d-\frac{p}{2+d} < a^\ast_0$.
Therefore the sequence $a^\ast_k, k\geq 0$ is monotonically decreasing.
It is also non negative, and so, lower bounded. There are two solutions to the fixed point of $a = 1+d-\frac{p}{1+a}$ are as follows.
\[
\frac{d+\sqrt{(d+2)^2-4p}}{2},\frac{d-\sqrt{(d+2)^2-4p}}{2}.
\]
As $p \ge 0,d>0$ the following holds
\[
\frac{d-\sqrt{(d+2)^2-4p}}{2}<\frac{d+\sqrt{(d+2)^2-4p}}{2}\le 1+d
\]
Hence it converges to $a_\infty$, the largest fixed point
of $a = 1+d-\frac{p}{1+a}$.\\
$(b)$ We first show that $b^\ast_i, i \geq 0$ are bounded. Towards this, observe that
$b^\ast_i \leq {pb^\ast_{i-1} }+ {2p(1+d)\psi}$ for all $i \geq 1$. In particular,
$b^\ast_1 \leq {pb^\ast_{0} }+ {2p(1+d)\psi},~b^\ast_2 \leq {pb^\ast_{1} }+ {2p(1+d)\psi} \leq {p^2b^\ast_{0} }+ {2p^2(1+d)\psi}+ 2p(1+d)\psi $, and in general,
$b^\ast_i \leq \frac{2p(1+d)\psi}{1-p}$. This proves the claim.

Next, we observe that $b_\infty$ as defined in the statement of the lemma
is the fixed point of
\[
b =\frac{p(2a_{\infty}\psi+b)}{1+a_{\infty}}.
\]
Now, we define $\delta_i=b^\ast_i-b_{\infty}$ and show that $|\delta_i| \to 0$, which
yields the desired result. Note that
\begin{align*}
\delta_{i+1} &= b^\ast_{i+1} - b_\infty\\
             &=\frac{p(2a^\ast_i\psi+b^\ast_i)}{1+a^\ast_i} - \frac{p(2a_\infty\psi+b_\infty)}{1+a_\infty}\\
             &=  2p\psi\left(\frac{a^\ast_i}{1+a^\ast_i}- \frac{a_\infty}{1+a_\infty}\right)+\frac{pb^\ast_i}{1+a^\ast_i} - \frac{pb^\ast_i}{1+a_\infty}\\
              & \ \ \ + \frac{pb^\ast_i}{1+a_\infty} - \frac{pb_\infty}{1+a_\infty}\\
              &= \Delta_i + \bar{p}\delta_i,
\end{align*}
where
\[
\Delta_i = 2p\psi\left(\frac{a^\ast_i}{1+a^\ast_i}-\frac{a_{\infty}}{1+a_{\infty}}\right) + pb^\ast_i\left(\frac{1}{1+a^\ast_i} - \frac{1}{1+a_\infty}\right)
\]
and $\bar{p} = \frac{p}{1+a_\infty} < 1$. From triangle inequality,
$|\delta_{i+1}| \leq |\Delta_i|+\bar{p}|\delta_i|$. Moreover, since $a_i \to a_\infty$
and $b_i, i\geq0$, are bounded, $\Delta_i \to 0$. Hence, for any $\epsilon > 0$,
there exits a $i_\epsilon$ such that for all $i \geq i_\epsilon$, $\Delta_i \leq \epsilon$.
Hence $|\delta_{i_\epsilon + 1}| \leq  \bar{p}|\delta_{i_\epsilon}| + \epsilon$,
$|\delta_{i_\epsilon + 2}| \leq  \bar{p}^2|\delta_{i_\epsilon}| + \bar{p}\epsilon +
\epsilon$. In general,
\[
|\delta_i| \leq  \bar{p}^{(i-i_\epsilon)}|\delta_{i_\epsilon}| + \frac{\epsilon}{1-\bar{p}}
\]
for all $i \geq i_\epsilon$. So, $\lim_{i \to \infty}|\delta_i| \leq \frac{\epsilon}{1-\bar{p}}$.
Since $\epsilon$ can be chosen arbitrarily close to $0$,
$\lim_{i \to \infty}|\delta_i| = 0$.\\
We have $b^\ast_0 = 0 < 2\psi$. Now, assuming
$b^\ast_i < 2\psi$ for some $i$,
\[
b^\ast_{i+1} = \frac{p(2a^\ast_i\psi + b^\ast_i)}{1+a^\ast_i}  < \frac{2p\psi(1+a^\ast_i)}{1+a^\ast_i}<2\psi.
\]
Hence, by induction, $b^\ast_i < 2\psi$ for all $i \geq 0$.
\subsection{Proof of Lemma~\ref{Quadraticlemma:monotonicity-xk}}
\label{Quadraticapp:monotonicity-xk}
$(a)$To prove $\frac{2(x+\psi)-b_k}{2(1+a_k)}>0$, it suffices to prove the claim for $x = 0$. From Lemma~\ref{Quadraticlemma:monotonicity-ak-bk}$(b)$, the claim holds.\\
$(b)$Since $a_0=1+d$ and $b_0=0$, we clearly see that $\frac{2(x+\psi)-b_0}{2(1+a_0)}<\psi, \forall x \in [0,\psi]$.
We inductively prove that $\frac{2(x+\psi)-b_k}{2(1+a_k)}<\psi,\forall k \geq 0$. Let the result hold for the $k$-stage problem,
\begin{equation}
\label{Quadraticeqn:pi-k-less-shi}
\frac{2(x+\psi)-b_k}{2(1+a_k)}<\psi, \forall x \in [0,\psi].
\end{equation}
We argue that
\begin{equation*}
\frac{2(x+\psi)-b_{k+1}}{2(1+a_{k+1})}<\psi, \forall x \in [0,\psi].
\end{equation*}
Since the left hand side is increasing in $x$, it suffices to show
that
\begin{align*}
\frac{4\psi-b_{k+1}}{2(1+a_{k+1})} &< \psi \\
\text{or, } 2\psi a_{k+1}+b_{k+1} &> 2\psi.
\end{align*}
Using~\eqref{Quadraticeqn:ak-star} and ~\eqref{Quadraticeqn:bk-star},
\begin{align*}
\lefteqn{2\psi a_{k+1}+b_{k+1}} \\
& = 2\psi \left(1+d-\frac{p}{1+a_k}\right)+\frac{p(2a_k\psi+b_k)}{(1+a_k)}\\
& = 2\psi(1+d)+\frac{(b_kp+2pa_k\psi-2p\psi)}{(1+a_k)}\\
& = 2\psi(1+d)+p\frac{2\psi a_k+b_k-2\psi}{(1+a_k)}\\
& > 2\psi,
\end{align*}
where the last inequality is obtained by setting $x = \psi$ in~\eqref{Quadraticeqn:pi-k-less-shi}.
This completes the induction step.
\subsection{Proof of Theorem~\ref{Quadraticthm:optimal-policy}}
\label{Quadraticapp:optimal-policy}
Let us first recall the notions of $k$-stage problems and $k$-stage
optimal cost functions $J_k$. For all $k \geq 0$, we will express
$J_k$ as
\begin{equation}
\label{Quadraticeqn:jk-general-form}
J_{k}(x)=\min_{u \in [0,\psi]}\left\{(\psi-u+x)^2+dx^2+a_k u^2+b_k u +c_k\right\}.\hspace{-0.1in}
\end{equation}
Comparing with~\eqref{Quadraticeqn:J0}, $a_{0}=a^\ast_{0},b_{0}=b^\ast_{0},c_{0}=0$.

Considering the form of $J_k$ in~\eqref{Quadraticeqn:jk-general-form},
the optimal policy for the $k$-stage problem
\begin{equation}
\label{pi-k-bounded}
\pi_{k}(x) =\min\left\{\max\left\{\frac{2(x+\psi)-b_k}{2(1+a_k)},0\right\},\psi\right\}.
\end{equation}
Using Lemma~\ref{Quadraticlemma:monotonicity-xk} for $k = 0$, ~\eqref{pi-k-bounded} can be written as
\[
\pi_{0}(x) =\frac{2(x+\psi)-b^\ast_{0}}{2(1+a^\ast_{0})},
\]
and hence
\begin{align*}
J_0(x)=(\frac{2a^\ast_{0}(x+\psi)+b^\ast_{0}}{2(1+a^\ast_{0})})^2+dx^2 +a^\ast_{0}(\frac{2(x+\psi)-b^\ast_{0}}{2(1+a^\ast_{0})})^2\\
\ \ \ \ \ \ \ \ + b^\ast_{0}(\frac{2(x+\psi)-b^\ast_{0}}{2(1+a^\ast_{0})})+c^\ast_{0}
\end{align*}
where $c^\ast_{0}$ is a certain constant. Therefore, using~\eqref{Quadraticeqn:Jk}, $a_{1}=a^\ast_{1},b_{1}=b^\ast_{0}$. Therefore again using~\eqref{pi-k-bounded}, Lemma~\ref{Quadraticlemma:monotonicity-xk} for $k = 1$, it can be shown that  
\[
\pi_{1}(x) =\frac{2(x+\psi)-b^\ast_{1}}{2(1+a^\ast_{1})}.
\]
Continuing in the same fashion, we see that for all $k \ge 1$
\[
\pi_{k}(x) =\frac{2(x+\psi)-b^\ast_{k}}{2(1+a^\ast_{k})}.
\]
Further, from Lemma~\ref{Quadraticlemma:monotonicity-xk} and {\it $a^\ast_{k} \to a_\infty,b^\ast_k \to b_{\infty}$ as $k \to \infty$} it can be observed that
\[
\pi^\ast(x) =\frac{2(x+\psi)-b_\infty}{2(1+a_\infty)}.
\]
%\section{}
\subsection{Proof of Lemma~\ref{Quadraticlemma:nash-properties-a-b}}
The tagged user's optimal costs in the $k$-stage problems
\begin{equation}
    \label{Quadraticeqn:nash-c-0-cost}
C_0(x)=\min_{u \in [0,\psi]}\{(\psi-u)(\psi-u+x)+u^2(1+d)\}
\end{equation}
and for all $k \ge 1$,
\begin{align}
    \label{Quadraticeqn:nash-c-k-cost}
C_k(x)=\min_{u \in [0,\psi]}\left\{(\psi-u)(\psi-u+x)+du^2\right.\nonumber \\
\left.+u(u
+p(\psi-\pi'_{k-1}(u)))\right\}.
\end{align}
Recall that $\pi'_k(\cdot)$ denote the corresponding optimal policies. As before 
$lim_k \to \infty C_k(\cdot) = C(\cdot)$ and $\lim_k \to \infty  \pi'_k(\cdot) =  \pi'(\cdot)$.
\label{Quadraticappendix:nash-properties-a-b}
$(a)$ Notice that the mapping $a \mapsto \frac{1}{4+2d-2pa}$ is monotonically increasing.
Further, $a'_0 > a'_{-1}$. Therefore the sequence $a'_k, k\geq -1$ is monotonically increasing.
%It is also nonnegative, and so, lower bounded.
Hence it converges to $a'_\infty$, the smallest fixed point
of $a = \frac{1}{4+2d-2pa}$. 

Using $p \in [0,1]$ and the definition of $a'_{\infty}$ the following holds
\[
a'_{\infty}<\frac{1+\frac{d}{2}}{p}.
\]
$(b)$ As $\frac{4+2d}{2p}>\frac{1+\frac{d}{2}}{p}$, observe that $a'_{\infty} < \frac{4+2d}{2p}$. Hence $4+2d-2pa'_k$ is decreasing but strictly positive. 
We now show that $b'_i, i \geq 0$ are bounded. Towards this, observe that
$b'_i \leq pb'_{i-1} + (2-p)\psi $ for all $i \geq 1$. In particular,
$b'_1 \leq pb'_0 + (2-p)\psi $, $b'_2 \leq p^2 b'_0 + p((2-p)\psi) +(2-p)\psi$, and in general,
$b'_i \leq b'_0 + \frac{(2-p)\psi}{1-p}$. This proves the claim.

Next, we observe that $b_\infty$ as defined in the statement of the lemma
is the fixed point of
\[
    b=\frac{(2-p)\psi+pb}{4+2d-2pa'_{\infty}}.
    \]
Now, we define $\delta_i=b^\ast_i-b_{\infty}$ and show that $|\delta_i| \to 0$, which
yields the desired result. Note that
\begin{align*}
\delta_{i+1} &= b'_{i+1} - b'_\infty\\
             &=\frac{(2-p)\psi+pb'_i}{4+2d-2pa'_{i}} - \frac{(2-p)\psi+pb' _{\infty}}{4+2d-2pa'_{\infty}}\\
             &= ((2-p)\psi+pb'_i)(\frac{1}{4+2d-2pa'_{i}}-\frac{1}{4+2d-2pa'_{\infty}})\\ 
              & \ \ \ +\frac{p}{4+2d-2pa'_{\infty}}(b'_i-b'_{\infty})\\
              &= \Delta_i + \bar{p}\delta_i,
\end{align*}
where
\[
\Delta_i = ((2-p)\psi+pb'_i)(\frac{1}{4+2d-2pa'_{i}}-\frac{1}{4+2d-2pa'_{\infty}})
\]
and $\bar{p} = \frac{p}{4+2d-2pa'_{\infty}} < 1$. From triangle inequality,
$|\delta_{i+1}| \leq |\Delta_i|+\bar{p}|\delta_i|$. Moreover, since $a'_i \to a'_\infty$
and $b'_i, i\geq 0$, are bounded, $\Delta_i \to 0$. Hence, for any $\epsilon > 0$,
there exits a $i_\epsilon$ such that for all $i \geq i_\epsilon$, $\Delta_i \leq \epsilon$.
Hence $|\delta_{i_\epsilon + 1}| \leq  \bar{p}|\delta_{i_\epsilon}| + \epsilon$,
$|\delta_{i_\epsilon + 2}| \leq  \bar{p}^2|\delta_{i_\epsilon}| + \bar{p}\epsilon +
\epsilon$. In general,
\[
|\delta_i| \leq  \bar{p}^{(i-i_\epsilon)}|\delta_{i_\epsilon}| + \frac{\epsilon}{1-\bar{p}}
\]
for all $i \geq i_\epsilon$. So, $\lim_{i \to \infty}|\delta_i| \leq \frac{\epsilon}{1-\bar{p}}$.
Since $\epsilon$ can be chosen arbitrarily close to $0$,
$\lim_{i \to \infty}|\delta_i| = 0$.   
\subsection{Proof of Lemma~\ref{Quadraticlemma:quad-game-no-caps}}
\label{Quadraticapp:quad-game-no-caps}
To prove $a'_kx+b'_k>0,\forall x \in [0,\psi],\forall k \ge 0$, 
it suffices to prove for $x=0$. It in turn implies $b'_k \ge0$. From~\eqref{Quadraticeqn:b-case2-nash} it can be seen that 
$b'_{-1}>0$. Also from Lemma~\ref{Quadraticlemma:nash-properties-a-b} we can observe that $a_i \le a'_\infty<\frac{1+\frac{d}{2}}{p},\forall i\ge 0$. Therefore, the following holds true
\begin{align*}
    a'_i<\frac{4+2d}{2p}.
\end{align*}
Hence,
\[
    4+2d-2pa'_i>0.
\]
Also, $2-p \ge 0$, thus $b'_i>0,\forall i \ge 0$. Hence,
\[
a'_kx+b'_k>0,\forall x \in [0,\psi],\forall k \ge 0.
\]
To prove $a'_kx+b'_k<\psi,\forall x \in [0,\psi]$ it suffices to prove $a'_k\psi+b'_k<\psi$. From~\eqref{Quadraticeqn:a-case2-nash},\eqref{Quadraticeqn:b-case2-nash} it can  be verified that 
\[
a'_0\psi+b'_0<\psi.
\]
We inductively prove that $a'_k\psi+b'_k<\psi,\forall k \ge 0$. Let the following result hold
\begin{equation}
\label{Quadraticeqn:induction-step-game-quad}
a'_k\psi+b'_k<\psi.
\end{equation}
We argue that 
\[
a'_{k+1}\psi+b'_{k+1}<\psi.
\]
Using~\eqref{Quadraticeqn:a-case2-nash} and~\eqref{Quadraticeqn:b-case2-nash} we have 
\begin{align*}
    a'_{k+1}\psi+b'_{k+1}=\frac{\psi+(2-p)\psi+pb'_k}{2(2+d-a'_kp)}
    \end{align*}
    Hence, it suffices to prove the following
    \begin{align*}
   \frac{\psi+(2-p)\psi+pb'_k}{2(2+d-a'_kp)}&<\psi\\
   2a'_kp\psi+pb'_k&<(1+p+2d)\psi
    \end{align*}
Using~\eqref{Quadraticeqn:induction-step-game-quad} it is enough to show the following
    \begin{align*}
      a'_kp\psi+p\psi&<(1+p+2d)\psi\\
      a'_k &<\frac{1+2d}{p}
    \end{align*}
    Last inequality holds true from Lemma~\ref{Quadraticlemma:nash-properties-a-b}$(a)$. This completes the induction step. Hence the lemma follows.
    \subsection{Proof of Theorem~\ref{Quadraticthm:quad-nash-equilibrium}}
 \label{Quadraticappendix:quad-nash-equilibrium}
 Let us first recall the notion of $k$-stage problems and the corresponding optimal strategies. For all $k \ge 0$, we will express $\pi'_k(\cdot)$ as
 \[
 \pi'_k(x)=a_kx+b_k.
 \]
 Recall the functions $C_k(\cdot), k \ge 0$~(see~\eqref{Quadraticeqn:nash-c-0-cost}-\eqref{Quadraticeqn:nash-c-k-cost}); for $k \geq 0$,
 \begin{align*}
 C_k(x)=\min_{u \in [0,\psi]}\left\{(\psi-u)(\psi-u+x)+u(u(1+d) +p(\psi-a_{k-1}u - b_{k-1}))\right\}
 \end{align*}
 From~\cite[Chapter~2, Proposition~1.2(b)]{b21}, $C_k(\cdot)$s converge to the optimal cost function $C(\cdot)$ and $\pi'_k(\cdot)$ converge to $\pi'(\cdot)$ irrespective of the initial function $C_0(x)$ in the value iteration. Now, we analyze value iteration starting with $a_{-1} = a'_{-1}$ and $b_{-1} = b'_{-1}$. Considering the form of $C_k(x)$,
 \[
 \pi'_k(x)=\min\left\{\max\left\{\frac{x+(2-p)\psi+pb'_{k-1}}{4+2d-2pa'_{k-1}},0\right\},\psi\right\}.
 \]
 for all $k \ge 0$. From Lemma~\ref{Quadraticlemma:quad-game-no-caps}, it can be seen that $0<\pi'_0<\psi$. Hence, 
 \[
 \pi'_0(x)=\frac{x+(2-p)\psi+pb'_{-1}}{4+2d-2pa'_{-1}}.
  \]
Using~\eqref{Quadraticeqn:a-case2-nash} and~\eqref{Quadraticeqn:b-case2-nash}, it can be seen that $a_0=a'_0,b_0=b'_0$. Hence,
 \[
 \pi'_1(x)=\min\left\{\max\left\{\frac{x+(2-p)\psi+pb'_{0}}{4+2d-2pa'_{0}},0\right\},\psi\right\}.
  \]
Thus again using Lemma~\ref{Quadraticlemma:quad-game-no-caps}, it can be seen that $0<\pi'_1<\psi$.
\[
 \pi'_1(x)=\frac{x+(2-p)\psi+pb'_{0}}{4+2d-2pa'_{0}}.
\]
Similarly it can be argued that for all $k \ge 0$
\[
 \pi'_k(x)=\frac{x+(2-p)\psi+pb'_{k-1}}{4+2d-2pa'_{k-1}}.
\]
From Lemma~\ref{Quadraticlemma:nash-properties-a-b} as $\{a'_k\},\{b'_k\}$ converge to $a'_\infty,b'_\infty$ respectively, optimal policy $\pi'(x)$ can be written as
\[
 \pi'(x)=\frac{x+(2-p)\psi+pb'_{\infty}}{4+2d-2pa'_{\infty}}=a'_{\infty}x+b'_{\infty}.
\]
\color{black}
\subsection{Proof of Lemma~\ref{Lemma:Quadratic-newLemma}}
\label{Appendix:newLemma}
The following hold almost surely
\begin{align}
\label{eqn:pi-bound}
    |\pi^\ast(x)-\tilde{\pi}(x)|&=\Bigg | \frac{x+\psi-\frac{b(p)}{2}}{1+a(p)}-\frac{x+\psi-\frac{b(\tilde{p})}{2}}{1+a(\tilde{p})}\Bigg |\nonumber \\
    & =\Bigg |(x+\psi)\Bigg(\frac{1}{1+a(p)}-\frac{1}{1+a(\tilde{p})} \Bigg)+\Bigg(\frac{b(\tilde{p})(1+a(p))-b(p)(1+a(\tilde{p}))}{2(1+a(p))(1+a(\tilde{p}))}\Bigg)\Bigg |\nonumber \\
    &\le \Bigg |2\psi \Bigg(\frac{a(\tilde{p})-a(p)}{(1+a(p))(1+a(\tilde{p}))} \Bigg)\Bigg |+\Bigg |\Bigg(\frac{b(\tilde{p})(1+a(p))-b(p)(1+a(\tilde{p}))}{2(1+a(p))(1+a(\tilde{p}))}\Bigg)\Bigg |.
\end{align}
We now deal with the two terms separately. Let us first start bounding the first term. Let us define the first derivative of $a(\cdot)$ to be $a'(\cdot)$, which can be defined as 
\[
|a'(p)|=\frac{1}{\sqrt{d^2+4(1+d-p)}} \le \frac{1}{\sqrt{d^2+4d}}
\]
As the function $a(\cdot)$ is continuously differentiable and bounded the following would hold almost surely
\[
|a(\tilde{p})-a(p)| \le \frac{1}{\sqrt{d^2+4d}} |\tilde{p}-p|
\]
By definition of $a(p)$, the following holds
\begin{align*}
    \Bigg(\frac{1}{(1+a(p))(1+a(\tilde{p}))} \Bigg) \le \frac{4}{(2+d+\sqrt{d^2+4d})^2}
\end{align*}
Finally, almost surely
\begin{align}
\label{eqn:equationone}
    \Bigg |2\psi \Bigg(\frac{a(\tilde{p})-a(p)}{(1+a(p))(1+a(\tilde{p}))} \Bigg)\Bigg | \le \frac{8\psi}{(2+d+\sqrt{d^2+4d})^2\sqrt{d^2+4d}}  |\tilde{p}-p|
\end{align}
Let us now look at the second term in~\eqref{eqn:pi-bound}. Let us define the first derivative of $b(p)$ to be $b'(p)$, it can be written as follows
\[
b'(p)=2\psi \frac{a(p)+a(p)^2+pa'(p)-p^2a'(p)}{(1+a(p)-p)^2}.
\]
In the following we bound the first derivative of $b(p)$, using the facts
\begin{enumerate}
    \item $a(p) \le 1+d$
    \item $\frac{1}{(1+a(p)-p)^2} \le \frac{4}{(d+\sqrt{d^2+4d})^2}$
    \item $a'(p) \le \frac{1}{\sqrt{d^2+4d}}$
\end{enumerate}
Using the above facts the following holds
\begin{equation}
\label{eqn:b-derivative}
b'(p) \le 8\psi \Bigg( \frac{(1+d)(2+d)+\frac{1}{\sqrt{d^2+4d}}}{(d+\sqrt{d^2+4d})^2} \Bigg).
\end{equation}
As the function $b(\cdot)$ is continuously differentiable and bounded the following holds, almost surely
\begin{equation}
    \label{eqn:b-derivative-bounds}
    |b(\tilde{p})-b(p)|\le 8\psi \Bigg( \frac{(1+d)(2+d)+\frac{1}{\sqrt{d^2+4d}}}{(d+\sqrt{d^2+4d})^2} \Bigg) |\tilde{p}-p|
\end{equation}
We next bound $b(\tilde{p})a(p)-b(p)a(\tilde{p})$ on the above using the facts $b(p) \le 2\psi, a(p) \le 1+d$. The following holds almost surely
\begin{align*}
\label{eqn:b-a-bound}
  b(\tilde{p})a(p)-b(p)a(\tilde{p}) &= b(\tilde{p})a(p)- b(\tilde{p})a(\tilde{p})+b(\tilde{p})a(\tilde{p})-b(p)a(\tilde{p})\nonumber \\
  &=b(\tilde{p})\{a(p)-a(\tilde{p})\}+a(\tilde{p})\{b(\tilde{p})-b(p)\}\nonumber \\
  & \le 2\psi \{a(p)-a(\tilde{p})\}+(1+d)\{b(\tilde{p})-b(p)\}\nonumber
  \end{align*}
  \begin{align}
 |b(\tilde{p})a(p)-b(p)a(\tilde{p})| & \le \Bigg(\frac{2\psi}{\sqrt{d^2+4d}}+8\psi (1+d)\Bigg( \frac{(1+d)(2+d)+\frac{1}{\sqrt{d^2+4d}}}{(d+\sqrt{d^2+4d})^2} \Bigg)\Bigg) |\tilde{p}-p|
\end{align}
Let us come back to the second term, almost surely
\begin{align}
\label{eqn:second-term}
\Bigg |\Bigg(\frac{b(\tilde{p})(1+a(p))-b(p)(1+a(\tilde{p}))}{2(1+a(p))(1+a(\tilde{p}))}\Bigg)\Bigg | & \le \Bigg |\frac{b(\tilde{p})-b(p)}{2(1+a(p))(1+a(\tilde{p}))} \Bigg| + \Bigg | \frac{b(\tilde{p})a(p)-b(p)a(\tilde{p})}{2(1+a(p))(1+a(\tilde{p}))}\Bigg| \nonumber \\
& \le 4\psi \frac{4 (2+d) \Bigg(   \frac{(1+d)(2+d)+\frac{1}{\sqrt{d^2+4d}}}{(d+\sqrt{d^2+4d})^2} \Bigg) + \Bigg(\frac{1}{\sqrt{d^2+4d}}\Bigg)}{{(d+\sqrt{d^2+4d})^2}} |\tilde{p}-p|
\end{align}
Second inequality follows from~\eqref{eqn:b-a-bound},~\eqref{eqn:b-derivative-bounds} and Fact-2. Using~\eqref{eqn:equationone} and~\eqref{eqn:second-term}, almost surely
\[
|\pi^\ast(x)-\tilde{\pi}(x)| \le  \Bigg[  \frac{8\psi}{(2+d+\sqrt{d^2+4d})^2\sqrt{d^2+4d}}+ 4\psi \frac{4 (2+d) \Bigg(   \frac{(1+d)(2+d)+\frac{1}{\sqrt{d^2+4d}}}{(d+\sqrt{d^2+4d})^2} \Bigg) + \Bigg(\frac{1}{\sqrt{d^2+4d}}\Bigg)}{{(d+\sqrt{d^2+4d})^2}} \Bigg] |\tilde{p}-p|
\]
\subsection{Proof of Lemma~\ref{lemma:result-a}}
\label{Appendix:result-a}
We first fix $\epsilon,h$ thereby we obtain $\tilde{n}$. Note that $\tilde{\pi}^k(x)$ is a random variable that depends on $X_i,i\le \tilde{n}$.
Let us first begin with $|\tilde{\pi}(x_1)-\pi^\ast(x_2)|$, almost surely the following holds
\begin{align}
    |\tilde{\pi}(x_1)-\pi^\ast(x_2)| &\le |\tilde{\pi}(x_1)-\pi^\ast(x_1)|+|{\pi}^\ast(x_1)-\pi^\ast(x_2)|\nonumber \\
    & \le K |\tilde{p}-p|+\frac{|x_1-x_2|}{1+a(p)}
    \label{eqn:pi-basic-inequality}
\end{align}
Recollect the definition of $a(p)$
\begin{align}
   a(p)&=\frac{d+\sqrt{d^2+4(1+d-p)}}{2}\nonumber\\ 
   1+a(p)&=\frac{d+2+\sqrt{d^2+4(1+d-p)}}{2}\nonumber\\ 
   1+a(p)&\ge \frac{2+d+d}{2}\nonumber\\
   \frac{1}{1+a(p)} &\le \frac{1}{1+d}.
   \label{eqn:1+d}
\end{align}
Using~\eqref{eqn:1+d} in~\eqref{eqn:pi-basic-inequality} we obtain the following holds almost surely
\[
|\tilde{\pi}(x_1)-\pi^\ast(x_2)| \le K.|\tilde{p}-p|+\frac{|x_1-x_2|}{1+d}
\]
Hence the lemma holds.
\subsection{Proof of Lemma~\ref{lemma:result-b}}
\label{appendix:result-b}
We first fix $\epsilon,h$ thereby we obtain $\tilde{n}$. Note that  $\tilde{p}=\frac{1}{\tilde{n}}\sum_{i=1}^{\tilde{n}}X_i$. Also,  $\tilde{\pi}^k(x)$ is a random variable that depends on $X_i,i\le \tilde{n}$.
Result holds for $n=0$ from Lemma~\ref{lemma:result-a} with $x_1=x_2=x$.

Let the lemma for $n=k$, i.e., the following holds almost surely
\begin{equation}
\label{eqn:inequality-k}
    |\tilde{\pi}^k(x)-{\pi^\ast}^k(x)|\le \sum_{i=0}^k z^i K|\tilde{p}-p|
\end{equation}
To complete the induction step we now consider $n=k+1$. Using Lemma~\ref{lemma:result-a} with $x_1=\tilde{\pi}^k(x),x_2={\pi^\ast}^k(x)$ the following holds almost surely
\begin{align*}
|\tilde{\pi}^{k+1}(x)-{\pi^\ast}^{k+1}(x)|&\le K|\tilde{p}-p|+z|\tilde{\pi}^k(x)-{\pi^\ast}^k(x)|,\\
&\le K|\tilde{p}-p|+z\sum_{i=0}^k z^i K|\tilde{p}-p|,\\
&= \sum_{i=0}^{k+1} z^i K|\tilde{p}-p|.
\end{align*}
Hence the lemma holds.
\subsection{Proof of Lemma~\ref{lemma:result-c}}
\label{Appendix:result-c}
We first fix $\epsilon,h$. Note that $\tilde{p}=\frac{1}{\tilde{n}}\sum_{i=1}^{\tilde{n}}X_i$. Note that  $\tilde{J}_k(x)$ is a random variable that depends on $X_i,i\le \tilde{n}$. Using~\eqref{Quadraticeqn:J0-pi} and~\eqref{Quadraticeqn:J0-pi-tilda} the following holds almost surely as $x,\tilde{\pi}(\cdot),\pi^\ast(\cdot) \in [0,\psi]$.
\begin{align}
\label{eqn:j0-inequality}
    \tilde{J}_0(x_1)-J_0(x_2) \le 2\psi \{(d+2)|x_1-x_2|+(3+d)|\tilde{\pi}(x_1)-\pi^\ast(x_2)|\}.
\end{align}
Using the above inequality with $x_1=\tilde{\pi}^y(x),x_2={\pi^\ast}^y(x)$, we obtain the following almost surely
\[
\tilde{J}_0(\tilde{\pi}^y(x))-J_0({\pi^\ast}^y(x)) \le 2\psi \{(d+2)|\tilde{\pi}^y(x)-{\pi^\ast}^y(x)|+(3+d)|\tilde{\pi}^{y+1}(x))-{\pi^\ast}^{y+1}(x)|\}.
\]
Using Lemma~\ref{lemma:result-b} with $n=y, n=y+1$ the following holds almost surely
\begin{align*}
\tilde{J}_0(\tilde{\pi}^y(x))-J_0({\pi^\ast}^y(x)) &\le 2\psi \{(d+2)\sum_{i=0}^y z^i+(3+d)\sum_{i=0}^{y+1} z^i\}K |\tilde{p}-p| \\
&\le \frac{2K\psi (5+2d)}{1-z} |\tilde{p}-p|,
\end{align*}
where the last inequality holds as $z<1$. Hence the result holds.
\subsection{Proof of Lemma~\ref{lemma:result-II}}
\label{Appendix:result-II}
We first fix $\epsilon,h$. Note that $\tilde{p}=\frac{1}{\tilde{n}}\sum_{i=1}^{\tilde{n}}X_i$. Note that  $\tilde{J}_k(x)$ is a random variable that depends on $X_i,i\le \tilde{n}$. Using~\eqref{Quadraticeqn:Jk-pi} and~\eqref{Quadraticeqn:Jk-pi-tilda} the following holds almost surely
\begin{align*}
\tilde{J}_{k-n}(\tilde{\pi}^{n-1}(x))-{J}_{k-n}({\pi^\ast}^{n-1}(x)) & \le \frac{A}{K}|\tilde{\pi}^n(x)-{\pi^\ast}^n(x)|+\frac{B}{K}|\tilde{\pi}^{n-1}(x)-{\pi^\ast}^{n-1}(x)|\\
&~~~~~ +p\{\tilde{J}_{k-n-1}(\tilde{\pi}^{n-1}(x))-{J}_{k-n-1}({\pi^\ast}^{n-1}(x))\}.
\end{align*}
Using Lemma~\ref{lemma:result-b}, in the above inequality the following almost surely
\begin{align}
\tilde{J}_{k-n}(\tilde{\pi}^{n-1}(x))-{J}_{k-n}({\pi^\ast}^{n-1}(x)) & \le \frac{A}{K}\sum_{i=0}^n z^i K|\tilde{p}-p|+\frac{B}{K}\sum_{i=0}^{n-1} z^i K|\tilde{p}-p|\nonumber\\
&~~~~~ +p\{\tilde{J}_{k-n-1}(\tilde{\pi}^{n-1}(x))-{J}_{k-n-1}({\pi^\ast}^{n-1}(x))\}\nonumber\\
& = [(A+B)\sum_{i=0}^{n-1} z^i+Az^n]|\tilde{p}-p|\nonumber\\
&~~~~~ +p\{\tilde{J}_{k-n-1}(\tilde{\pi}^{n-1}(x))-{J}_{k-n-1}({\pi^\ast}^{n-1}(x))\}.
\label{eqn:base}
\end{align}
Using Lemma~\ref{lemma:result-c} with $y=k-1$, the following holds
\begin{equation}
\label{eqn:j0-resultII}
    \tilde{J}_{0}(\tilde{\pi}^{k-1}(x))-{J}_{0}({\pi^\ast}^{k-1}(x)) \le K'|\tilde{p}-p| 
\end{equation}
Using~\eqref{eqn:base} with $n=k-1$ and~\eqref{eqn:j0-resultII} the following holds 
\begin{equation}
\label{eqn:j0-resultII}
    \tilde{J}_{1}(\tilde{\pi}^{k-2}(x))-{J}_{1}({\pi^\ast}^{k-2}(x)) \le [(A+B)\sum_{i=0}^{k-2}z^i+Az^{k-1}]+pK']|\tilde{p}-p| 
\end{equation}
Similarly, iteratively using~\eqref{eqn:base} with $n=\{k-2,\dots,n\}$ and reusing those results the lemma holds.
\subsection{Proof of Theorem~\ref{theorem:bound-unknown}}
\label{Appendix:bound-unknown}
We first fix $\epsilon,h$, thereby we obtain a $\tilde{n}=-\frac{1}{2\epsilon^2}\log{\frac{1-h}{2}}$ such that $\forall n \ge \tilde{n}$
\begin{equation*}
    P\Big(|\frac{1}{n}\sum_{i=1}^nX_i-p|\le \epsilon \Big) \ge h.
\end{equation*} 
Note that $\tilde{p}=\frac{1}{\tilde{n}}\sum_{i=1}^{\tilde{n}}X_i$. Note that  $\tilde{J}_k(x)$ is a random variable that depends on $X_i,i\le tilde{n}$.
From~\eqref{Quadraticeqn:Jk-pi-tilda} and~\eqref{Quadraticeqn:Jk-pi}, the following holds almost surely
\begin{align}
    \tilde{J}_k(x)-J_k(x)&=(\psi+x-\tilde{\pi}(x))^2-(\psi+x-{\pi^\ast}(x))^2+(1-p)(1+d)(\tilde{\pi}(x)^2-\pi^\ast(x)^2)\nonumber\\
    &~~~~+p\{\tilde{J}_{k-1}(\tilde{\pi}(x))-J_{k-1}(\pi^\ast(x))\}\nonumber\\
    & \le 2\psi\{(1-p)(1+d)+2\}|\pi^\ast(x)-\tilde{\pi}(x)|+p\{\tilde{J}_{k-1}(\tilde{\pi}(x))-J_{k-1}(\pi^\ast(x))\}\nonumber\\
    & \le 2\psi\{(1-p)(1+d)+2\}K|\tilde{p}-p|+p\{\tilde{J}_{k-1}(\tilde{\pi}(x))-J_{k-1}(\pi^\ast(x))\}\nonumber\\
    &=A|\tilde{p}-p|+p\{\tilde{J}_{k-1}(\tilde{\pi}(x))-J_{k-1}(\pi^\ast(x))\}
    \label{eqn:0-stage-bound}
    \end{align}
The first inequality follows as $x \in [0,\psi]$ and $\pi^\ast(\cdot),\tilde{\pi}(\cdot) \in [0,\psi]$. Second inequality follows from Lemma~\ref{Lemma:Quadratic-newLemma}. 

Using Lemma~\ref{lemma:result-II} with $n=1$ in~\eqref{eqn:0-stage-bound} we obtain the following almost surely
\[
\tilde{J}_k(x)-J_k(x) \le \Bigg(A+p\Big[(A+B)\sum_{j=0}^{k-1}p^j \sum_{i=0}^{j}z^i+Az^n \sum_{j=0}^{k-1} (pz)^j+p^{k-1}K'\Big]\Bigg)|\tilde{p}-p|
\]
As $k \to \infty$, the above inequality boils down to
\[
\tilde{J}(x)-J(x) \le \Bigg[A+\frac{A+B}{(1-z)(1-p)}+\frac{Az}{1-pz}\Bigg]|\tilde{p}-p|,
\]
almost surely as $z<1$ and $K'$ is finite. Note that $p$ is an unknown parameter, however after $\forall n \ge \tilde{n}$, we know with atleast probability $h$, 
\[
|\tilde{p}-p| \le \epsilon
\]
For fixed $\epsilon,h$, there exists a $\tilde{n}=-\frac{1}{2\epsilon^2}\log{\frac{1-h}{2}}$  such that $\forall n \ge \tilde{n}$, with a probability of atleast $h$ the following holds
    \[
    \tilde{J}(x)-J(x) \le \Bigg[A+\frac{A+B}{(1-z)(1-\tilde{p}-\epsilon)}+\frac{Az}{1-(\tilde{p}+\epsilon)z}\Bigg]\epsilon
    \]
Hence the theorem holds.

    \color{black}

\end{appendix}

%\bibliographystyle{IEEEtran}
%\bibliography{references}
\end{document}